\newcommand{\varsub}[1]{_{\smash{#1}}} 
\newcommand{\us}[1]{^{\smash{#1}}}   
\newcommand{\ustime}{\us{(t)}} 
\newcommand{\capL}{\mathcal{L}}
\newcommand{\E}{\mathbb{E}}
\newcommand{\R}{\mathbb{R}}
\newcommand{\N}{\mathbb{N}}
\newcommand{\intN}{N}
\newcommand{\intT}{T}
\newcommand{\pearldo}{\texttt{do}}
\newcommand{\vw}{\mathbf{w}}
\newcommand{\vx}{\mathbf{x}}
\pgfplotsset{compat=newest}
\tikzset{
    line/.style = {
        thick,
        ->,
        > = {
            Triangle[length=1.5mm, width=1.5mm]
        }
    },
    arrow/.style = {
        line
    },
    hidden node/.style = {
        circle,
        minimum size = 1cm,
        inner sep=0pt,
        text width=10mm,
        draw = none,
        align = center,
        thick
    },
    variable node/.style = {
        hidden node,
        draw = black,
        fill = black,
        text = white,
        align = center,
    },
    latent node/.style = {
        hidden node,
        draw = black,
        fill = white,
        text = black,
        align = center,
    },
    factor node/.style = {
        hidden node,
        rectangle,
        draw = black,
        align = center,
    },
    observed node/.style = {
        variable node,
        fill = gray!30,
        text = black,
        align = center,
    },
}
\renewcommand{\paragraph}[1]{\par\vspace{1em}\noindent\textbf{\textit{#1.}}}
\newcommand{\directionref}[2]{\hyperref[#1]{\textbf{[RD#2]}}}
\newcommand{\challengeref}[2]{\hyperref[#1]{\textbf{[C#2]}}}
\newcommand{\ubar}[1]{\underaccent{\bar}{#1}}
\tikzset{
mystyle/.style={
  circle,
  inner sep=0pt,
  text width=10mm,
  align=center,
  draw=black,
  fill=white,
  line width=0.3mm
  }
}
\theoremstyle{plain}
\newtheorem{theorem}{Theorem}[section]
\newtheorem{proposition}[theorem]{Proposition}
\theoremstyle{definition}
\newtheorem{definition}[theorem]{Definition}
\title{Towards Replication-Robust Analytics Markets}
\author{%
  Thomas Falconer\thanks{Correspondence to: \texttt{falco@dtu.dk}} \\
  Technical University of Denmark, DK \\
  \And
  Jalal Kazempour \\
  Technical University of Denmark, DK\\
  \AND
  Pierre Pinson \\
  Imperial College London, UK\\
}
\begin{document}

\maketitle

\begin{abstract}
    \textit{Despite recent advancements in machine learning, in practice, relevant datasets are often distributed among market competitors who are reluctant to share. To incentivize data sharing, recent works propose analytics markets, where multiple agents share features and are rewarded for improving the predictions of others. These rewards can be computed by treating features as players in a coalitional game, with solution concepts that yield desirable market properties. However, this setup incites agents to strategically replicate their data and act under multiple false identities to increase their own revenue and diminish that of others, limiting the viability of such markets in practice. In this work, we develop an analytics market robust to such strategic replication for supervised learning problems. We adopt Pearl's do-calculus from causal inference to refine the coalitional game by differentiating between observational and interventional conditional probabilities. As a result, we derive rewards that are replication-robust by design.}
\end{abstract}

\section{Introduction} \label{sec:intro}
Machine learning relies heavily on the quality and quantity of input data, however firms often find it difficult, if not impossible, to acquire rich datasets themselves. This is often due to privacy constraints. For instance, in the medical domain, data is highly sensitive and subject to strict regulations \citep{rieke2020future}, yet hospitals could benefit from sharing patient information to mitigate social biases in diagnostic support systems. Similar examples include rival distributors sharing sales data to improve supply forecasts, or hotel operators using airline data to better anticipate demand.
One promising solution to this problem is \textit{federated learning}, 
where a central model (e.g., a neural network) is trained by multiple distributed agents without centralizing any data \citep{zhang2021survey}. Instead, only model parameters (e.g., weights and biases) are shared, with the option to include differential privacy by design \citep{wei2020federated}. 

However, this still assumes that data owners will collaborate altruistically---an assumption that may not hold if these agents also compete in downstream markets \citep{gal1985information}.
To incentivize data sharing, one can instead frame data as a commodity within a market-based framework \citep{bergemann2019markets}. Whilst many platforms already exist to purchase raw datasets directly from their owners via bilateral transactions \citep{rasouli2021data}, pricing these datasets is not easy as their value ultimately depends on when, how, and by whom they are eventually used. Further, since datasets often contain overlapping information, their value is inherently combinatorial.
With this in mind, recent works instead advocate for \emph{analytics markets}---real-time mechanisms that match datasets to machine learning tasks based on predictive performance \citep{agarwal2019marketplace}.
In these markets, a central platform collects features from multiple sellers, and buyers post machine learning tasks along with bids that reflect their willingness
to pay for marginal improvements in accuracy. The platform processes these inputs and determines what information the buyer should receive, and what price they should pay. These payments establish the market revenue, which is allocated amongst the sellers, rewarding them in proportion to their contributions to the improved accuracy. Importantly, buyers only receive refined predictive models, rather than raw features, positioning
these markets as mechanisms to incentivize federated learning.
Such markets have been proposed for both classification \citep{koutsopoulos2015auctioning} and regression \citep{pinson2022regression} tasks.

To allocate revenue, each feature can be treated as a player in a coalitional game, for which well-established solution concepts can be applied, namely semivalues \citep{dubey1981value}, which are characterized by a set of axioms---symmetry, efficiency, null-player, and additivity---that lead to desirable market properties by design (see \citet{chalkiadakis2011computational} for precise definitions of these properties). A feature’s semivalue represents it's expected marginal contribution to predictive performance given all subsets (or coalitions) of other features. The Shapley value \citep{shapley1997value} is a particularly appealing semivalue as it is the only one which satisfies all four axioms.

\subsection{Challenges}
For any feature vector $\vx \in \mathbb{R}\us{\intN}$, a revenue allocation policy should ideally be $\boldsymbol{\phi} : \mathcal{L} \times \mathbb{R}\us{\intN} \mapsto \mathbb{R}\us{\intN}$, where $\mathcal{L}$ is a set of possible scoring rules $\ell : \mathbb{R}\us{\intN} \mapsto \mathbb{R}$ that, given an observation, map the feature vector to a real value. In other words, the output of $\ell(\vx)$ is decomposed into contributions $\boldsymbol{\phi} (\ell, \vx) = \left(\phi\varsub{1}, \dots, \phi\varsub{\intN} \right)$ for each feature, such that $\ell$ need only be evaluated for the compete vector $\vx$. However, to compute the Shapley values, the scoring rule needs to be evaluated for $2\us{\intN}$ coalitions of features, yet many machine learning models cannot easily produce outputs for partial inputs due to matrix dimension mismatches.

To address this, one must also define a so-called \textit{lifting} function $\xi : \mathcal{L} \times \mathbb{R}\us{\intN} \times 2\us{\intN} \mapsto \mathbb{R}$, which extends $\ell$ to operate on subsets $\omega \subseteq \{1, \dots, \intN\}$ of features \citep{merrill2019generalized}. That is, $\xi(\ell, \vx, \omega)$ assigns a value for each $\omega$, so lifts the scoring rule $\ell$ from the original domain $\mathbb{R}\us{\intN}$ to $\mathbb{R}\us{\intN} \times 2\us{\intN}$, which simulates the removal of features to compute partial score evaluations by averaging over the out-of-coalition features according to some probability distribution.
However, there are many distributions to choose from to achieve this, leaving open the question of which is most appropriate for revenue allocation in analytics markets. In this paper, we approach this choice through the lens of causality, specifically, we consider whether the distribution over out-of-coalition features should be based on \textit{observational} or \textit{interventional} conditional probabilities. From a causal inference perspective, observational conditioning reflects the expected model output given the observed values of the in-coalition features, while interventional conditioning reflects the expected model output when one actively “intervenes" to set those features to specific values.

In existing works, the choice of distribution is observational (e.g., \citealp{agarwal2019marketplace}). Interestingly, these works also reveal a vulnerability to malicious behavior where agents can submit replicates of their features under different identities in the hope that they are highly correlated with features from other agents, as in this case they can increase their revenue and diminish that of others. 
For instance, consider a market with two sellers, each selling one feature which are both identical. One would naturally expect for any revenue to be split equally. However, if one seller replicated their feature once and sold it again in the market under a false identity, they would now receive two thirds of the revenue whilst the other only one third, without providing any additional predictive performance, thus the market is not robust to replication.

Various attempts have been made to remedy this. For example, \citet{ohrimenko2019collaborative}’s more elaborate mechanism is robust to replication but requires each seller to have their own analytics task (or prediction problem), posing practical challenges. \citet{agarwal2019marketplace} modify the Shapley value to penalize similar features, however this comes at the cost of budget balance, with some revenue remaining unallocated. Their proposal also remains vulnerable to spiteful agents who are willing to reduce others’ revenue at the expense of their own. The key contribution of our work is a new market design that replaces observational conditioning with interventional conditioning, and we prove that this approach guarentees replication-robust rewards by design.

\subsection{Contributions}
The key contributions of our work are as follows: (i) we propose a general analytics market design for supervised learning problems that subsumes recent proposals in literature; (ii) we show that there are many ways in which Shapley values can be used to allocate revenue and that the differences between them can be explained from a causal perspective; (iii) we show that the replication incentives in existing works can be explained using \citet{pearl2012calculus}'s seminal work on causality; (iv) by replacing the conventional approach of conditioning by \textit{observation} with conditioning by \textit{intervention}, we design a market that is robust to replication whilst also accounting for spiteful agents, thereby taking a step toward the practical application of these markets; and finally (v) we demonstrate our findings on a real-world case study---out of many potential applications, we choose to study wind power forecasting due to data availability, the known value of sharing distributed data, and the fact it is a sandbox that can be easily shared and used by others.

The remainder of this paper is structured as follows: Section~\ref{sec:preliminaries} presents our general market framework. 
In Section~\ref{sec:lift_formulations} we derive variants of the characteristic function and analyze each from a causal perspective.
In Section~\ref{sec:robust} we discuss the impact of each on the robustness of the market to replication.
Section~\ref{sec:experimental_analysis} then illustrates our findings on a real-world case study.
Finally, Section~\ref{sec:conclusions} gathers a set of conclusions and perspectives for future work.

\section{Market Framework} \label{sec:preliminaries}
We focus on analytics markets in which the buyers’ post regression tasks, namely regression markets, however our framework can be adapted to any supervised learning problem. This builds on the seminal work of \citet{dekel2010regression}, who were one of the first to study data acquisition mechanisms for regression tasks where agents might behave strategically when sharing private data.
We model a single buyer and multiple sellers, which naturally extends to parallel, independent market instantiations for multiple buyers. Given a finite set of agents, we label the one acting as the buyer at any instantiation as the central agent and the remaining agents as support agents. The central agent’s valuation for predictive accuracy reflects, for instance, their perceived cost of forecast errors in a downstream decision-making process.
We denote this valuation $\lambda \in \R\varsub{\geq 0}$, the value of which we assume to be known and reported truthfully. We refer the reader to \citet{ravindranath2024data} for a recent proposal of how $\lambda$ may be elicited in practice.

\paragraph{Central Agent}
The central agent targets a stochastic process $\{Y\ustime\}$, from which they observe a time series $\{y\ustime\}$, with each $y\ustime \in \R$ a realization from $Y\ustime$ at time $t \in \N$. Instead of targeting a specific functional of $Y\ustime$, such as the expected value or a particular quantile, we model the entire distribution, conditioned on the available features. Any particular summary statistic extracted by the central agent is treated as part of the downstream decision-making process.
The central agent owns a vector of $M$ features, the values of which at time $t$ are denoted by $\vx\ustime = (x\ustime\varsub{1}, \dots, x\ustime\varsub{M})\us{\top}$.

\paragraph{Regression Task}
The central agent posts a \textit{regression task} to the market platform. This includes their observed target and own feature variables, as well as a latent variable model of the data-generating process, which specifies a prior $p(\vw)$ over latent variable $\vw$ and a likelihood $p(y\ustime \vert \vx\ustime, \vw)$, with the likelihood assumed to be a Gaussian distribution with expected value:
\begin{equation}
   f(\vx\us{(t)}, \boldsymbol{w}) = \vw\us{\top} \vx\ustime, \label{eq:central_agent_likelihood}
\end{equation}
with the variance treated as a hyperparameter. Specifically, we focus on parametric regression with functions that are linear in their coefficients to guarantee certain market properties. One can of course obtain a rich class of models with linear combinations of nonlinear basis functions or splines, however we adopt only a linear basis in this work (see \citet{falconer2024bayesian} for an application of nonlinear basis functions to analytics markets).

The central agent can infer posterior beliefs with their own features---after observing ($\vx\ustime, y\ustime$), they can update prior beliefs over the latent variable $\vw$ via Bayes’ rule, with the posterior given by
\begin{equation}
  p(\vw \vert y\ustime, \vx\ustime) =
  \frac{p(y\ustime \vert \vx\ustime, \vw) p(\vw)}{\int p(y\ustime, \vw \vert \vx\ustime) d\vw}.
  \label{eq:central_agent_posterior}
\end{equation}
In time-series regression analysis, observations often arrive sequentially, so the posterior at time $t-1$ becomes the prior at time $t$, such that $ p(\vw \vert  y\ustime, \vx\ustime) \propto p(y\ustime \vert \vx\ustime, \vw) p(\vw \vert y\us{(t-1)}, \vx\us{(t-1)})$. Greater weight can be placed on more recent data by augmenting this update step to use exponential forgetting, recasting it as a trade-off between the posterior from the previous time step and the original prior, modeling a gradual erosion of confidence in past data. Specifically, the prior at time $t+1$ is replaced by $ p(\vw \vert y\ustime, \vx\ustime; \tau) \propto p(\vw \vert y\ustime, \vx\ustime)\us{\tau} p(\vw)\us{1-\tau}$,
where $\tau \in \R\varsub{(0, 1)}$ is analogous to the forgetting factor in the special case of time-weighted Least squares, which uses exponential decay to assign greater weight to more recent time indices.

\clearpage
We assume a centered isotropic Gaussian prior, meaning the resulting posterior is also Gaussian due to conjugacy with the likelihood in \eqref{eq:central_agent_likelihood}. This Bayesian approach to regression subsumes many frequentist methods, making it easy to apply, for instance, ordinary least-squares, for which a Gaussian likelihood is an implicit assumption.

Before updating beliefs, the current posterior is used for out-of-sample forecasting. Specifically, to make a prediction for time $t+1$, the predictive distribution is obtained by marginalizing out $\vw$ with respect to the posterior in \eqref{eq:central_agent_posterior}, such that
\begin{equation}
     \hat{y}\us{(t+1)} = \int p(y\us{(t+1)} \vert \vx\us{(t+1)}, \vw) p(\vw \vert y\ustime, \vx\ustime) d\vw,
     \label{eq:bayesian_prediction}
\end{equation}
where $\hat{y}\us{(t+1)} = p(y\us{(t+1)} \vert \vx\us{(t+1)})$, which too is Gaussian, with variance equal to the sum of the variance of the noise and the posterior uncertainty.

The final part of the regression task is a scoring rule $\ell \in \mathcal{L}$ used to evaluate performance, where $\mathcal{L}$ is the family of negatively oriented, strictly proper scoring rules.
In an online setup, evaluating $\ell$ at each time step can be viewed as a recursive and adaptive estimation of its expected value, in the sense that a greater weight is placed on more recent data. The estimate of $\E[\ell]$ at time $t$ using the central agent's own features can be described by the following recursion:
\begin{equation}
    \E[\ell]\ustime = (1 - \tau) \ell\ustime + \tau \E[\ell]\us{(t-1)} \label{eq:recursive_loss}
\end{equation}

\paragraph{Support Agents}
Suppose there are $D$ additional features available in the market distributed amongst the support agents, such that the full feature vector at time $t$ is given by 
\begin{equation*}
    \vx\ustime = (x\ustime\varsub{1}, \dots, x\ustime\varsub{M}, x\ustime\varsub{M+1}, \dots, x\ustime\varsub{M+D})\us{\top},
\end{equation*}
where the first $M$ entries are the central agent’s own features, and the remaining $D$ entries belong to support agents, indexed by the ordered set $\Omega = \{M+1, \dots, M+D\}$.

Each of the support agents $a \in \mathcal{A}\varsub{-c}$ owns a subset $\Omega_a \subseteq \Omega$ of indices, such that $D = \sum\varsub{a \in \mathcal{A}\varsub{-c}} \vert\Omega_a\vert$. 
To lighten notation, for any subset of indices $\omega \subseteq \Omega$, we write $\vx\ustime\varsub{\omega}$ as the vector at time $t$ consisting of the central agent's features together with the features indexed by $\omega$, such that
\begin{equation*}
    \vx\ustime\varsub{\omega} = (x\ustime\varsub{i})\varsub{\forall i \in [M]}\us{\top} \oplus (x\ustime\varsub{j})\varsub{\forall j \in \omega}\us{\top}.
\end{equation*}

The central agent’s latent variable model is then extended to incorporate the additional features, such that the expected value of the likelihood becomes:
\begin{align*}
    f(\vx\us{(t)}, \boldsymbol{w}) = \underbracket{w_0 + \sum\varsub{i \in [M]} w\varsub{i} x\varsub{i}\us{(t)}}_{\substack{\textnormal{Terms belonging}\\\textnormal{to the central agent.}}} + \underbracket{\sum\varsub{a \in \mathcal{A}\varsub{-c}} \sum\varsub{j \in \Omega_a} w\varsub{j} x\varsub{j}\us{(t)}}_{\substack{\textnormal{Terms belonging}\\\textnormal{to the support agents.}}}.
    \label{eq:interpolated_function}
\end{align*}

For any $\omega \subset \Omega$, we write $\hat{y}\ustime\varsub{\omega}$ for the prediction at time $t$ based on features $\vx\ustime\varsub{\omega}$, with $\ell\ustime\varsub{\omega}$ the associated score, such that $\ell\ustime\varsub{\Omega}$ measures predictive performance using all available features.

\paragraph{Market Clearing} 
Once a regression task is posted by the central agent, the market is ready to be cleared, which, as in real-world MLOps pipelines, is separated into training and testing stages. In the training stage, Bayesian inference is applied to observed data; in the test stage, the trained model is used for forecasting on previously unseen data. Any feature's value is thus determined by its marginal contributions to both the in-sample \textit{and} out-of-sample score. 

The market revenue is a function of the exogenous valuation, $\lambda$, and the extent to which model-fitting is improved. This is measured using the current estimate of expected value of the scoring rule, such that any time $t$, the market revenue is
\begin{equation*}
    r\ustime\varsub{c} = \lambda \left( \E[\ell\varsub{\varnothing}]\ustime - \E [\ell\varsub{\Omega}]\ustime \right), \label{eq:market_revenue}
\end{equation*}
where $\ell\varsub{\varnothing}$ is the evaluation of the scoring rule using only the central agent's features, and $r\ustime\varsub{c}$ is also the payment collected from the central agent. Once this has been collected, the next step is to reward the support agents. To do this, each feature is first portioned a share of the market revenue using an allocation policy $\boldsymbol{\phi}$, which computes the marginal contribution of each to the gain in predictive performance. 
The sum of these allocations for the features belonging to a seller is their reward, denoted by
\begin{equation}
    r\ustime\varsub{a} = r\ustime\varsub{c} \sum\varsub{i \in \Omega_a}  \E[\phi_i]\ustime, \label{eq:reward_rule}
\end{equation}
where $\boldsymbol{\phi} \in \Phi$ is a probability simplex such that 
\begin{equation*}
    \Phi = \left\{ \boldsymbol{\phi} \in \mathbb{R}\us{D} : \phi_i \geq 0, \sum\varsub{i=1}\us{D} \phi\varsub{i} = 1  \right\},
\end{equation*}
where $\phi\varsub{i}$ is the gain in performance contributed by $x\ustime\varsub{i}$ for every $i \in \Omega$, as this would ensure no support agent loses money and all revenue is allocated.

To formulate $\boldsymbol{\phi}$, we turn to coalitional game theory, which provides a principled framework to divide utility amongst cooperating players. Treating the $D$ features owned by support agents as players in a coalitional game, we define a set function $\xi : \capL \times \R\us{M + D} \times 2^{\Omega} \to \R$, where $\xi(\ell, \vx\varsub{\Omega}\ustime, \omega)$ is the score achieved by each coalition $\omega \subseteq \Omega$. Hence, $\xi$ may also be referred to as a lifting function which lifts the scoring rule $\ell$ from its original domain $\R\us{M + D}$ to $\R\us{M + D} \times 2\us{\Omega}$.
This requires simulating the removal of features to enable partial evaluations of $\ell$, which is not straightforward. As a result, formulating $\xi$ is also challenging. We defer a detailed exploration of its formulation and the associated difficulties to Section~\ref{sec:lift_formulations}.
For a given $\xi$, the the Shapley value can be written as
\begin{equation} 
    \phi\ustime\varsub{i} = \frac{1}{D}\sum\varsub{\omega \subseteq \Omega \setminus i}  \binom{D-1}{\vert \omega \vert}\us{-1} \delta\ustime\varsub{i} (\omega),
    \label{eq:shapley_ml}
\end{equation}
where $\delta\varsub{i}\ustime (\omega) = \xi\ustime\varsub{\omega} - \xi\ustime\varsub{\omega \cup i}$ is the marginal contribution of feature $i$ to coalition $\omega$, having written $\xi\ustime\varsub{\omega} = \xi(\ell, \vx\ustime\varsub{\Omega}, \omega)$ for brevity.
Evaluating \eqref{eq:shapley_ml} exactly requires summing over all $2\us{D}$ feature subsets, resulting in exponential time complexity, and is known to be NP-hard \citep{deng1994complexity}.
Hence, in practice, one must generally rely on approximation methods \citep{castro2009polynomial, mitchell2022sampling}. 
However, in our work we are primarily focused on the functional form of $\xi$, which is agnostic to the choice of sampling method, so exploring state-of-the-art approximations is out of scope. 

\begin{definition}[\textbf{Market properties}] \label{def:properties}
    With the proposed regression framework and Shapley value-based revenue allocation, regression markets have the following properties:
    \begin{enumerate}
    \item \textit{Symmetry}---Any features that have equal contribution to all coalitions obtain equal reward, i.e., $\forall \omega \in \Omega\setminus \{i, j\} : \xi\ustime\varsub{\omega \cup i} \equiv \xi\ustime\varsub{\omega \cup j} \mapsto \phi_{i}\ustime \equiv \phi_{j}\ustime, \forall (i, j) \in \Omega, i \neq j$, which means allocations are invariant to permutation of indices.
    \item \textit{Linearity}---For any two features, their joint contribution to coalition is equal to the sum of their marginal contributions, i.e., $\xi\ustime\varsub{\omega \cup i} + \xi\ustime\varsub{\omega \cup j} = \xi\ustime\varsub{\omega \cup i, j}, \forall (i, j) \in \Omega$, ensuring that rewards are consistent if features are offered individually or as a bundle, removing any incentive to strategically package features.
    \item \textit{Budget balance}---The payment of the central agent is equal to the total sum of rewards received by all the support agents, i.e., $r\varsub{c}\ustime = \sum_{a \in \mathcal{A}_{-c}} r\varsub{a}\ustime$, which ensures all market revenue is allocated.
    \item \textit{Individual rationality}---Support agents have a weak preference to participate in the market rather than the outside option, i.e., $r\varsub{a}\ustime \geq 0, \forall a \in \mathcal{A}_{-c}$, meaning that no agent loses money.
    \item \textit{Zero-element}---If a support agent provides no feature, or provide features with zero marginal contribution to all coalitions, they earn no reward, i.e., $\forall \omega \in \Omega: \xi\ustime \varsub{\omega \cup i} \equiv \xi\ustime\varsub{\omega}, \forall i \in \Omega\varsub{a} \mapsto r_{a} = 0$.
    \item \textit{Truthfulness}---Support agents maximize their reward by reporting their true data.
\end{enumerate}
\end{definition}

\clearpage
These desirable market properties stem from the axioms of the Shapley value, a detailed proof of which is provided in \citet{falconer2024bayesian}. 

\section{Lifting Function} \label{sec:lift_formulations}
To evaluate $\ell$ for each subset (or coalition) of features $\omega \subseteq \Omega$, the lift averages the values of the remaining (out-of-coalition) features with respect to some probability distribution. It is not immediately clear which distribution \textit{should} be used as there are many options to choose from \citep{sundararajan2020many}, however most can be categorized as either \textit{observational} or \textit{interventional}. 

An observational lift uses the \textit{observational conditional expectation}, the expected score at time $t$, where the integral is taken with respect to the out-of-coalition features, conditional on in-coalition features taking on their observed values, such that
\begin{equation}
    \begin{aligned}
        \xi\us{(t), \textrm{obs}}\varsub{\omega} = 
        \int \ell(\vx\ustime\varsub{\omega}, \vx\ustime\varsub{\overline{\omega}}) p(\vx\ustime\varsub{\overline{\omega}}\vert \vx\ustime\varsub{\omega}) 
        d\vx\ustime\varsub{\overline{\omega}},
        \label{eq:observational_shapley}
    \end{aligned}
\end{equation}
where $\overline{\omega} = \Omega \setminus \omega$ denotes the out-of-coalition features.

The interventional lift instead uses the \textit{interventional conditional expectation}, given by
\begin{equation}
    \begin{aligned}
       \xi\us{(t), \textrm{int}}\varsub{\omega}
        &= 
        \int \ell(\vx\ustime\varsub{\omega}, \vx\ustime\varsub{\overline{\omega}}) p(\vx\ustime\varsub{\overline{\omega}}\vert \pearldo(\vx\ustime\varsub{\omega})) 
        d\vx\ustime\varsub{\overline{\omega}}, 
        \label{eq:interventional_shapley}
    \end{aligned}
\end{equation}
where $\pearldo(\cdot)$ is an operator from \citet{pearl2012calculus}'s \textit{do}-calculus.
In causal reasoning theory, the observational conditional probability in \eqref{eq:observational_shapley} describes the relationship between variables as they occur naturally, whereas the interventional conditional probability in \eqref{eq:interventional_shapley} is the result of ``intervening" by fixing a particular variable’s value \citep{pearl2010introduction}.
The key difference between \eqref{eq:observational_shapley} and \eqref{eq:interventional_shapley} is that in the former, conditioning on the observed values of the features in the coalition can alter the distribution of the out-of-coalition features if any indirect dependencies exist, whereas in the latter, the distribution of the out-of-coalition features is unaffected by the \textit{do}-intervention.

\begin{figure}[!t]
\centering
\begin{tikzpicture}[]
    \node (Z) at (0, 2) [latent node] {$Z\ustime$};
    \node (X1) at (-1, 0) [variable node] {$X\ustime\varsub{1}$};
    \node (X2) at (1, 0) [variable node] {$X\ustime\varsub{2}$};
    \draw[arrow] (Z) -- (X1) {};
    \draw[arrow] (Z) -- (X2) {};
    \draw[arrow, dashed] (X1) to[out=30,in=150] (X2) {};
    \draw[arrow, dashed] (X2) to[out=210,in=330] (X1) {};
\end{tikzpicture}
\caption{Direct (solid) and indirect (dashed) causal effects.Observed and latent variables are colored in black and white, respectively.}
\label{fig:backdoor_example}
\end{figure}

For further intuition, consider the graphical model in Figure~\ref{fig:backdoor_example}. The latent variable $Z\ustime$ is a confounder that causes both $X\ustime\varsub{1}$ and $X\ustime\varsub{2}$. Although there is no direct causal path from $X\ustime\varsub{1}$ to $X\ustime\varsub{2}$, there is an indirect backdoor path: $X\ustime\varsub{2} \leftarrow Z\ustime \to X\ustime\varsub{1}$,
which induces a statistical correlation between the two variables.
If we observe $X\ustime\varsub{1} = x\ustime\varsub{1}$, the resulting observational conditional distribution over $X\ustime\varsub{2}$, denoted $p(x\ustime\varsub{2} \vert x\ustime\varsub{1})$, reflects the correlation induced by the shared confounder $Z\ustime$. 
The interventional conditional distribution $p(x\ustime\varsub{2} \vert \pearldo(x\ustime\varsub{1}))$ instead describes what would happen if we artificially set $X\ustime\varsub{1}$ to $x\ustime\varsub{1}$, removing all incoming edges into $X\ustime\varsub{1}$, thereby blocking the backdoor path through $Z\ustime$. This isolates the direct casusal effect of $X\ustime\varsub{1}$ on $X\ustime\varsub{2}$, which is null here, so $p(x\ustime\varsub{2} \vert \pearldo(x\ustime\varsub{1})) = p(x\ustime\varsub{2})$. 

\paragraph{Marginal Expectations}
Whilst in this example, the interventional distribution $p(x\ustime\varsub{2} \vert \pearldo(x\ustime\varsub{1}))$ coincides with the marginal distribution $p(x\ustime\varsub{2})$, this equivalence is specific to the causal structure of the example and does not hold in general.
However, \citet{janzing2020feature} showed that to compute the marginal contributions of features in a machine learning model, it is natural to consider the inputs of the model as \textit{causes} of the output. Whilst this a seemingly trivial remark, the authors emphasize that to analyze what happens when the model inputs are changed, rather than the true features, the causal relations of concern are not those that appear between any features in the real world, but only those in the input-output system described by the machine learning model at hand.  
For instance, consider the \textit{true} data-generating process in Figure~\ref{fig:janzing_process}, where $X\ustime\varsub{1}$ directly causes $Y\ustime$, but $X\ustime\varsub{2}$ affects it only indirectly via latent confounder $Z\ustime$. The system output is the score $\ell\ustime$ via prediction $\hat{Y}\ustime$ and the \textit{true} features $X\ustime\varsub{i}$ are latent variables (as shown in Figure~\ref{fig:graphical_model}) that cause the model inputs $\tilde{X}\ustime\varsub{i}$ which are plugged into the regression model.

Observing $\tilde{X}\ustime\varsub{1} = \tilde{x}\ustime\varsub{1}$ (as in Figure~\ref{fig:janzing_observation}) preserves the backdoor path, so to compute the expected score conditional on this observation, one needs to integrate over $\tilde{X}\ustime\varsub{2}$ with respect to $p(\tilde{x}\ustime\varsub{2} \vert \tilde{x}\ustime\varsub{1})$, capturing the correlations induced by the shared confounder.
In contrast, intervening to fix $\tilde{X}\ustime\varsub{1} = \tilde{x}\ustime\varsub{1}$ (as in Figure~\ref{fig:janzing_intervention}) severs the edge from $Z\ustime$, isolating the causal effect of $\tilde{X}\ustime\varsub{1}$ on $\ell\ustime$. The backdoor is blocked, and the expectation is taken over the marginal, and thus $p(\tilde{x}\ustime\varsub{2} \vert \pearldo(\tilde{x}\ustime\varsub{1})) = p(\tilde{x}\ustime\varsub{2})$. 
Hence, in a machine learning context, interventional expectations coincide with marginal expectations in general.
As a result, when features are independent, the two lifts are equivalent, since $p(\tilde{x}\ustime\varsub{2} \vert \tilde{x}\ustime\varsub{1}) = p(\tilde{x}\ustime\varsub{2})$ in this case.

\begin{figure}
\centering
    \begin{subfigure}{0.24\textwidth}
    \centering
    \begin{tikzpicture}
        \node (target) at (0, 0) [variable node] {$Y\ustime$};
        \node (X1) at (-1, 2) [variable node] {$X\ustime\varsub{1}$};
        \node (X2) at (1, 2) [variable node] {$X\ustime\varsub{2}$};
        \node (Z0) at (0, 4) [latent node] {$Z\ustime$};
        \draw[arrow] (X1) -- (target) {};
        \draw[arrow, dashed] (X2) -- (target) {};
        \draw[arrow] (Z0) -- (X1) {};
        \draw[arrow] (Z0) -- (X2) {};
        \end{tikzpicture}
        \caption{Data}
        \label{fig:janzing_process}
        \end{subfigure}
        \hfill
        \begin{subfigure}{0.24\textwidth}
        \centering
        \begin{tikzpicture}
        \node (pred) at (-1, 0) [variable node] {$\hat{Y}\ustime$};
        \node (loss) at (1, 0) [variable node] {$\ell\ustime$};
        \node (X1) at (-1, 2) [variable node] {$\tilde{X}\ustime\varsub{1}$};
        \node (X2) at (1, 2) [variable node] {$\tilde{X}\ustime\varsub{2}$};
        \node (Z1) at (-1, 4) [latent node] {$X\ustime\varsub{1}$};
        \node (Z2) at (1, 4) [latent node] {$X\ustime\varsub{2}$};
        \node (Z0) at (0, 6) [latent node] {$Z\ustime$};
        \draw[arrow] (Z0) -- (Z1) {};
        \draw[arrow] (Z0) -- (Z2) {};
        \draw[arrow] (Z1) -- (X1) {};
        \draw[arrow] (Z2) -- (X2) {};
        \draw[arrow] (X1) -- (pred) node[black, midway, xshift=-5mm, yshift=0mm] {$w_1$};
        \draw[arrow] (X2) -- (pred) node[black, midway, xshift=6mm, yshift=0mm] {$w_2$};
        \draw[arrow] (pred) -- (loss) {};
    \end{tikzpicture}
    \caption{Model}
    \label{fig:graphical_model}
    \end{subfigure}
    \hfill
    \vspace{5mm}
    \begin{subfigure}{0.24\textwidth}
    \centering
    \begin{tikzpicture}
        \node (pred) at (-1, 0) [variable node] {$\hat{Y}\ustime$};
        \node (loss) at (1, 0) [variable node] {$\ell\ustime$};
        \node (X1) at (-1, 2) [observed node] {$\tilde{x}\ustime\varsub{1}$};
        \node (X2) at (1, 2) [variable node] {$\tilde{X}\ustime\varsub{2}$};
        \node (Z1) at (-1, 4) [latent node] {$X\ustime\varsub{1}$};
        \node (Z2) at (1, 4) [latent node] {$X\ustime\varsub{2}$};
        \node (Z0) at (0, 6) [latent node] {$Z\ustime$};
        \draw[arrow] (Z0) -- (Z1) {};
        \draw[arrow] (Z0) -- (Z2) {};
        \draw[arrow] (Z1) -- (X1) {};
        \draw[arrow] (Z2) -- (X2) {};
        \draw[arrow] (X1) -- (pred) node[black, midway, xshift=-5mm, yshift=0mm] {$w_1$};
        \draw[arrow] (X2) -- (pred) node[black, midway, xshift=6mm, yshift=0mm] {$w_2$};
        \draw[arrow] (pred) -- (loss) {};
        \end{tikzpicture}
        \caption{Observation}
        \label{fig:janzing_observation}
        \end{subfigure}
        \hfill
        \begin{subfigure}{0.24\textwidth}
        \centering
        \begin{tikzpicture}
        \node (pred) at (-1, 0) [variable node] {$\hat{Y}\ustime$};
        \node (loss) at (1, 0) [variable node] {$\ell\ustime$};
        \node (X1) at (-1, 2) [observed node] {$\tilde{x}\ustime\varsub{1}$};
        \node (X2) at (1, 2) [variable node] {$\tilde{X}\ustime\varsub{2}$};
        \node (Z1) at (-1, 4) [latent node] {$X\ustime\varsub{1}$};
        \node (Z2) at (1, 4) [latent node] {$X\ustime\varsub{2}$};
        \node (Z0) at (0, 6) [latent node] {$Z\ustime$};
        \draw[arrow] (Z0) -- (Z1) {};
        \draw[arrow] (Z0) -- (Z2) {};
        \draw[arrow] (Z2) -- (X2) {};
        \draw[arrow] (X1) -- (pred) node[black, midway, xshift=-5mm, yshift=0mm] {$w_1$};
        \draw[arrow] (X2) -- (pred) node[black, midway, xshift=6mm, yshift=0mm] {$w_2$};
        \draw[arrow] (pred) -- (loss) {};
    \end{tikzpicture}
    \caption{Intervention}
    \label{fig:janzing_intervention}
    \end{subfigure}
\caption{The causal structure within a machine learning model. Observed and latent variables are coloured in black and white, respectively.}
\label{fig:janzing_example}
\end{figure}

\paragraph{Interpreting Rewards} We now examine the differences in rewards obtained via the two lifts. 

\begin{theorem} \label{the:causal_effects}
    Marginal contributions derived using the observational conditional expectation as defined in (\ref{eq:observational_shapley}) can be decomposed into both indirect and direct causal effects.
\end{theorem}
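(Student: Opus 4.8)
The plan is to leverage the identity, established above via the analysis of Janzing et al., that in a machine learning context the interventional conditional expectation coincides with the marginal one, so that $p(\vx\ustime\varsub{\overline{\omega}} \vert \pearldo(\vx\ustime\varsub{\omega})) = p(\vx\ustime\varsub{\overline{\omega}})$ and the interventional lift in (\ref{eq:interventional_shapley}) integrates the out-of-coalition features against their marginal. Following that same reading, the interventional marginal contribution $\delta\varsub{i}\ustime(\omega)\big|_{\textrm{int}} = \xi\us{(t),\textrm{int}}\varsub{\omega} - \xi\us{(t),\textrm{int}}\varsub{\omega \cup i}$ is exactly the \emph{direct} causal effect of feature $i$ on the model output: it isolates the channel $\tilde{X}\ustime\varsub{i} \to \hat{Y}\ustime \to \ell\ustime$ of Figure~\ref{fig:graphical_model} by severing the edges incoming to $\tilde{X}\ustime\varsub{i}$. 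I would therefore write the observational marginal contribution as this direct effect plus a residual, and identify the residual with an indirect causal effect.

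First I would expand $\delta\varsub{i}\ustime(\omega)\big|_{\textrm{obs}}$ using (\ref{eq:observational_shapley}) and add and subtract the two interventional integrals, obtaining $\delta\varsub{i}\ustime(\omega)\big|_{\textrm{obs}} = \delta\varsub{i}\ustime(\omega)\big|_{\textrm{int}} + \Delta\varsub{i}\ustime(\omega)$, where $\Delta\varsub{i}\ustime(\omega)$ collects, for $\nu \in \{\omega,\, \omega \cup i\}$, terms of the form $\int \ell\ustime\varsub{\nu}\big( p(\vx\ustime\varsub{\overline{\nu}} \vert \vx\ustime\varsub{\nu}) - p(\vx\ustime\varsub{\overline{\nu}}) \big)\, d\vx\ustime\varsub{\overline{\nu}}$. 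The second step is to show each such term vanishes when the in-coalition features are statistically independent of the out-of-coalition features, and is otherwise supported entirely on the confounder-induced dependence --- the backdoor path $\tilde{X}\ustime\varsub{\overline{\nu}} \leftarrow Z\ustime \to \tilde{X}\ustime\varsub{\nu}$ of Figure~\ref{fig:janzing_example} --- so that $\Delta\varsub{i}\ustime(\omega)$ is precisely the portion of the marginal contribution routed through indirect paths. To make this concrete I would specialise to the linear-Gaussian model of Section~\ref{sec:preliminaries} with the log-score: the inner expectation $\E[\ell\ustime\varsub{\nu} \vert \vx\ustime\varsub{\nu}]$ reduces to a bias term in $\E[\vx\ustime\varsub{\overline{\nu}} \vert \vx\ustime\varsub{\nu}]$ plus a variance term in $\cov(\vx\ustime\varsub{\overline{\nu}} \vert \vx\ustime\varsub{\nu})$, and for jointly Gaussian features both differ from their marginal counterparts only through the cross-covariance $\cov(\vx\ustime\varsub{\overline{\nu}}, \vx\ustime\varsub{\nu})$; this renders $\Delta\varsub{i}\ustime(\omega)$ an explicit function of that cross-covariance, which is the statistical signature of the backdoor path and vanishes exactly when features are independent, recovering the equivalence of the two lifts noted above.

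The main obstacle is not the algebra but the causal bookkeeping: one must justify that $\Delta\varsub{i}\ustime(\omega)$ genuinely deserves the name \emph{indirect causal effect} rather than being an artefact of adding and subtracting. This requires a careful appeal to the \textit{do}-calculus on the model graph of Figure~\ref{fig:janzing_example}, arguing that $\delta\varsub{i}\ustime(\omega)\big|_{\textrm{int}}$ captures all and only the directed influence of $\tilde{X}\ustime\varsub{i}$ on $\ell\ustime$ that does not traverse the latent confounder $Z\ustime$, so that any remaining dependence --- and hence $\Delta\varsub{i}\ustime(\omega)$ --- must be mediated by the backdoor channel. A secondary subtlety is that, because the marginal contribution is a difference of two lifts (over $\omega$ and $\omega \cup i$), the indirect part is itself a difference of two correlation terms; I would check that this preserves the interpretation, in particular that $\Delta\varsub{i}\ustime(\omega) \equiv 0$ whenever all features are mutually independent.
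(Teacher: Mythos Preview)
Your approach is correct but takes a genuinely different route from the paper. The paper adds and subtracts a \emph{single} term, namely $\int \ell(\vx\varsub{\ubar{\omega}\cup i}\ustime,\vx\varsub{\bar{\omega}}\ustime)\, p(\vx\varsub{\bar{\omega}}\ustime \mid \vx\varsub{\ubar{\omega}}\ustime)\, d\vx\varsub{\bar{\omega}}\ustime$, yielding a mediation-style split: the \emph{direct} part changes the input from $\vx\varsub{\ubar{\omega}}\ustime$ to $\vx\varsub{\ubar{\omega}\cup i}\ustime$ while holding the out-of-coalition distribution fixed at $p(\cdot\mid \vx\varsub{\ubar{\omega}}\ustime)$, and the \emph{indirect} part holds the input at $\vx\varsub{\ubar{\omega}\cup i}\ustime$ while the distribution moves from $p(\cdot\mid \vx\varsub{\ubar{\omega}}\ustime)$ to $p(\cdot\mid \vx\varsub{\ubar{\omega}\cup i}\ustime)$. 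In particular, the paper's direct effect is \emph{not} the interventional marginal contribution---it still conditions on the pre-existing coalition---so the causal labels follow immediately from the form of the integrals (``input changes'' versus ``distribution changes'') without any further appeal to do-calculus. Your decomposition instead adds and subtracts \emph{both} interventional lifts, so that your direct piece is exactly $\delta\varsub{i}\ustime(\omega)\big|_{\textrm{int}}$ and your indirect piece is the difference of two ``obs minus int'' gaps. This buys you a tighter link to the interventional lift used downstream for replication robustness, but at the cost of the extra causal bookkeeping you flag: your residual $\Delta\varsub{i}\ustime(\omega)$ is a difference of two backdoor terms (one at $\omega$, one at $\omega\cup i$) rather than a single distributional shift, and labelling it ``indirect effect of feature $i$'' requires the additional argument you sketch. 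The paper's route is shorter and the labels are self-evident; yours is more explicitly tied to the proposed remedy but needs the linear-Gaussian specialisation or a do-calculus argument to earn the name.
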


\begin{proof}
First, if we let $\Theta$ be the set of all possible permutation of indices in $\Omega$, we can re-formulate the Shapley value in (\ref{eq:shapley_ml}) for feature $i$ at time $t$ as follows:
\begin{equation*}
    \phi_{i}\us{(t)} = \frac{1}{D!} \sum_{\theta \in \Theta} \delta_{i}\us{(t)} (\theta),
\end{equation*}
where now $\delta_{i}\us{(t)} (\theta) = \xi\us{(t)}\varsub{\{j : j \prec_\theta i\}} -  \xi\us{(t)}\varsub{\{j : j \preceq_\theta i\}}$, with $j \prec_\theta i$ meaning $j$ precedes $i$ in permutation $\theta$. Then, using the formulation in (\ref{eq:observational_shapley}), the marginal contribution of feature $i$ for a single permutation $\theta \in \Theta$ derived using the observational lift can be written as
\clearpage
\begin{align*}
    \delta\us{(t), \text{obs}}\varsub{} (\theta) =& \ \xi\us{(t), \text{obs}}\varsub{\ubar{\omega}} - \xi\us{(t), \text{obs}}\varsub{\ubar{\omega} \cup i},\\
    =& \underbracket{\ \int  \ell (\vx\varsub{\ubar{\omega} }\us{(t)}, \vx\varsub{\bar{\omega}\cup i }\us{(t)}) p( \vx\varsub{\bar{\omega}\cup i}\us{(t)} \vert \vx\varsub{\ubar{\omega}}\us{(t)}) d\vx\varsub{\bar{\omega}\cup i}\us{(t)} -\int  \ell (\vx\varsub{\ubar{\omega} \cup i }\us{(t)}, \vx\varsub{\bar{\omega}}\us{(t)}) p(\vx\varsub{\bar{\omega}}\us{(t)} \vert \vx\varsub{\ubar{\omega} \cup i}\us{(t)}) d\vx\varsub{\bar{\omega}}\us{(t)}}\varsub{\textnormal{Total effect}} \\
    =&\underbracket{ \ \int  \ell (\vx\varsub{\ubar{\omega} }\us{(t)}, \vx\varsub{\bar{\omega}\cup i }\us{(t)}) p( \vx\varsub{\bar{\omega}\cup i}\us{(t)} \vert \vx\varsub{\ubar{\omega}}\us{(t)}) d\vx\varsub{\bar{\omega}\cup i}\us{(t)} -\int  \ell (\vx\varsub{\ubar{\omega} \cup i }\us{(t)}, \vx\varsub{\bar{\omega}}\us{(t)}) p(\vx\varsub{\bar{\omega}}\us{(t)} \vert \vx\varsub{\ubar{\omega}}\us{(t)}) d\vx\varsub{\bar{\omega}}\us{(t)}}\varsub{\textnormal{Direct effect}} \\
    &
    +\underbracket{\ \int  \ell (\vx\varsub{\ubar{\omega} \cup i }\us{(t)}, \vx\varsub{\bar{\omega}}\us{(t)}) p(\vx\varsub{\bar{\omega}}\us{(t)} \vert \vx\varsub{\ubar{\omega}}\us{(t)}) d\vx\varsub{\bar{\omega}}\us{(t)} -\int  \ell (\vx\varsub{\ubar{\omega} \cup i }\us{(t)}, \vx\varsub{\bar{\omega}}\us{(t)}) p(\vx\varsub{\bar{\omega}}\us{(t)} \vert \vx\varsub{\ubar{\omega} \cup i}\us{(t)}) d\vx\varsub{\bar{\omega}}\us{(t)}}\varsub{\textnormal{Indirect effect}},
\end{align*}
where $\ubar{\omega} = \{j : j \prec_\theta i\}$ and $\bar{\omega}= \{j : j \succ_\theta i\}$. Thus, the marginal contribution captures two distinct effects. The first is the direct effect on the expected score when feature $i$ is observed and added to the coalition, keeping the distribution of the out-of-coalition features unchanged, in other words, using $p(\vx\varsub{\bar{\omega}}\us{(t)} \vert \vx\varsub{\ubar{\omega}}\us{(t)})$ instead of $p(\vx\varsub{\bar{\omega}}\us{(t)} \vert \vx\varsub{\ubar{\omega}\cup i}\us{(t)})$. The other is the indirect effect on the expected score when the distribution of the out-of-coalition features does change as a result of observing feature $i$, that is, when $p(\vx\varsub{\bar{\omega}}\us{(t)} \vert \vx\varsub{\ubar{\omega}}\us{(t)})$ changes to $p(\vx\varsub{\bar{\omega}}\us{(t)} \vert \vx\varsub{\ubar{\omega}\cup i}\us{(t)})$.
\end{proof}

Following Theorem~\ref{the:causal_effects}, we can see that by replacing conditioning by observation with the marginal
distribution as in (\ref{eq:interventional_shapley}), the indirect effect disappears entirely.
The interventional lift is thus more effective at crediting features upon which the model has an explicit algebraic dependence.
In contrast, the observational lift attributes features equally amongst indirect effects. Some argue that this is illogical as features not explicitly used by the model have the possibility of receiving non-zero allocation \citep{kumar2020problems}.
For instance, in Figure~\ref{fig:janzing_example}, if $w_2$ happens to be $0$, such that $X\varsub{2}\ustime$ has no direct effect on the score, the interventional lift would allcoate this feature no reward. However, it would receive positive reward with the observational lift given the backdoor path via the confounding variable if $w_1 > 0$.

In the context of analytics markets, we know that the predictive performance of the model out-of-sample is contingent upon the availability of features that were used during training, which, in practice, requires data of the support agents to be streamed continuously in a timely fashion.
If a feature was missing, the efficacy of the forecast may drop, the extent to which would relate not to any root causes or indirect effects regarding the data generating process, but rather the magnitude of direct effects on the particular model's output. 
With the interventional lift, comparatively larger rewards would be made to support agents with features to which the predictive performance of the model is most sensitive, providing incentives to avoid data being unavailable, somewhat resembling reserve payments in energy markets, where assets are remunerated for being available in times of need. 
With the observational lift, it would instead be unclear as to whether comparatively larger rewards are consequential of features having an impact on predictive performance, or merely a result of indirect effects through those that do. The interventional lift therefore better aligns with desirable intentions of the market. 

Lastly, these lifts also differ significantly in their computational expense. In particular, computing the observational conditional expectation of $\ell$ is generally intractable, requiring complex and expensive approximations \citep{covert2021explaining}. By contrast, intervening on features can be done relatively simply and efficiently \citep{lundberg2017unified}.

\paragraph{\textbf{Limitations}}
There is, of course, no free lunch, as if features are strongly correlated, conditioning by intervention can lead to model evaluation on points outwith the true data manifold.
This can visualized with the simple illustration in Figure~\ref{fig:manifold}. Whilst intervening on independent features always yields samples within the original manifold, if features are very correlated, there is a possibility of extrapolating beyond the training distribution, where model behavior is unknown. In the remainder of this section we consider what impact this may have on the market outcomes.
Multicollinearity inflates the variance of the coefficients, which can distort the estimated mean when the number of in-sample observations is limited.

The posterior variance of the $i$-th coefficient can be written as $\sigma^2(w_i) = \kappa_i / \xi \vert \mathcal{D}_{t} \vert$, where $\xi$ is the intrinsic noise precision of the target and $\kappa_i$ is the variance inflation factor, given by 
\begin{align*}
    \kappa_i = \textbf{e}_i^\top ( \sum_{t^\prime \leq t} \left(\vx\us{(t)}\right)^\top \vx\us{(t)} )^{-1} \textbf{e}_i, \quad \forall i \in \mathcal{I},
\end{align*}
where $\textbf{e}_i$ is the $i$-th basis vector. 
Whilst $\kappa_i \geq 1$, it has no upper bound, meaning $\kappa_i \mapsto \infty, \, \forall i$, with increasing extent of collinearity.

\begin{figure}
\centering
\includegraphics[width=\linewidth]{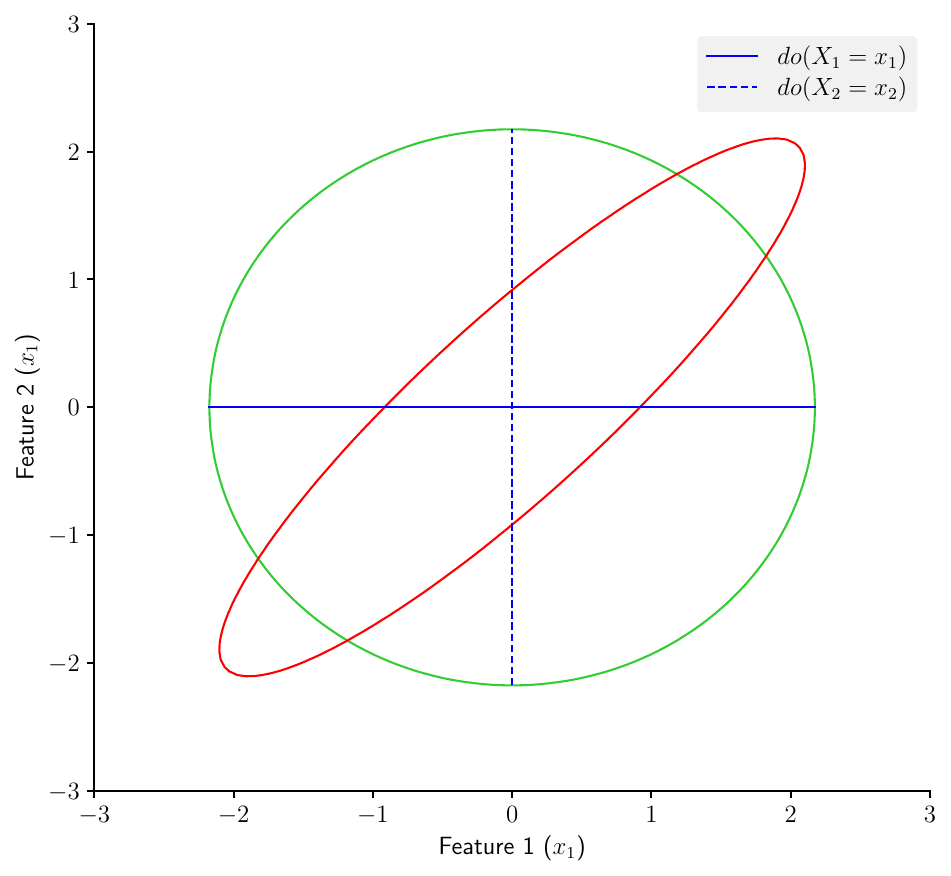}
\caption{Interventions producing points outwith the data manifold.
Green and red lines are level sets within which 0.99 quantile of the training data when features are independent and correlated, respectively. The blue lines represent the data extrapolated as a result of intervening on $X_1$ and $X_2$.}
\label{fig:manifold}
\end{figure}

From a variance decomposition perspective, the Shapley value of feature $i$ equals the variance in the target signal that it explains, such that, $\mathbb{E}[\phi_{i}]\us{(t)} = \left(\mathbb{E}[w_i]\us{(t)}\right)^2 \, var(X_{i}\us{(t)})$, approximating the behaviour of the interventional Shapley value when features are correlated \citep{owen2017shapley}.  
With a Gaussian posterior, the Shapley values follow a noncentral Chi-squared distribution with one degree of freedom. We can write the probability density function for the distribution of the Shapley value for feature $i$ in closed-form as 
\begin{align*}
    &p(\phi_{i}\us{(t)}) \\
    &= var(X_{i}\us{(t)}) var(w_i)\us{(t)} \sum_{n=0}^\infty \frac{e^{\eta/2}}{n!} \left(\frac{\eta}{2}\right)\us{\intN} \chi^2 (1 + 2n), 
\end{align*}
where $var(\cdot)\us{(t)}$ is the estimated variance at time $t$ and the noncentral Chi-squared distribution is seen to simply be given by a Poisson-weighted mixture of central Chi-squared distributions, $\chi^2(\cdot)$, with noncentrality $\eta = \left(\mathbb{E}[w_i]\us{(t)}\right)^2 / var(w_i)\us{(t)}$, for which the  moment generating function is known in closed form.
For feature $i$, the centered second moment is
\begin{align*}
    &var(\phi_i)\us{(t)} = 2var(w_i)\us{(t)} \\
    & \quad \times \left(2 \mathbb{E}[w_i]_t^2 + var(w_i)\us{(t)} \right) \left( var(X_{i}\us{(t)}) \right)^2
\end{align*}
so the variance of the allocation for any feature is a quadratic function of the variance of the corresponding coefficient, thus the variance inflation induced by multicollinearity. That being said, this is only a problem for small sample sizes and vanishes with increasing $t$, as $var(w_i)\us{(t)} \mapsto 0, \, \forall i$ \citep{qazaz1997upper}.
If only a limited number of observations are available, distorted revenues could be remedied using \textit{zero-Shapley} or \textit{absolute-Shapley} proposed in \citet{liu2020absolute}, or restricting evaluations to the data manifold \citep{taufiq2023manifold}. We leave an investigation into these remedies in relation to analytics markets to future work.

\section{Replication Robustness} \label{sec:robust}
In this section, we show that the use of observational conditional expectations in existing works on analytics markets (e.g., \citealp{agarwal2019marketplace}) explains the observed incentives for support agents to submit replicates of their features under different identities. We discuss the downsides of this and prove it can be remedied with the interventional lift

\begin{definition}[Replicate] \label{def:replicate} A replicate of feature $i$ is the original data obfuscated with noise, $x_{i}\ustime + \eta_{i}\ustime$, where $\eta_{i}\ustime$ is drawn from a centered distribution with finite variance, conditionally independent of the target given the feature. 
\end{definition}

We note that, obfuscating a feature in this manner is equivalent to regularizing it’s coefficients during training \citep{bishop1995training},
inducing an endogeneity bias that diminishes the feature’s contribution and, consequently, the reward obtained by
the support agent. This idea underpins the proof of the truthfulness property described in Definition~\ref{def:properties} as provided in \citet{falconer2024bayesian}. However, this property does not
account for the fact that agents could, in theory, submit multiple replicates along with their original feature, each under
a false identity. Whilst this would not impact predictive performance, it allows malicious support agents to increase their own reward and diminish that of others whilst providing no additional improvements if the observational lift is used to compute the Shapley values.

To see this, consider the graphical model in Figure~\ref{fig:replication_data}, with two features $X\varsub{1}\ustime$ and $X\varsub{2}\ustime$, each owned by a separate support agent, $a_1$ and $a_2$, respectively. Suppose these two features are correlated via a latent confounder $Z\ustime$, so that in the model they have equal affects on $\ell\ustime$, with $w_1 = w_2$ in Figure~\ref{fig:replication_model}, following the framework of \citet{janzing2020feature}.
If $a_2$ replicates their feature $K$ times, let $X\ustime\varsub{2\varsub{k}} = X\ustime\varsub{2} + \eta\ustime\varsub{2\varsub{k}}$ be the $k$-th replicate of $X\ustime\varsub{2}$. It is easy to show that each replicate $X\ustime\varsub{2\varsub{k}}$ has no direct causal effect on $\ell$ in the model. 
Specifically, let $\overline{\vx}\ustime = (x\varsub{1}\ustime, x\varsub{2}\ustime)\us{\top} \oplus (x\varsub{2\varsub{1}}\ustime, \dots, x\varsub{2\varsub{K}}\ustime)\us{\top}$ be the complete feature vector including all of the additional replicates, then we can obtain the posterior mean via \textit{maximum a posteriori} estimation, such that after $\intT$ observations:
\begin{align*}
    \frac{\partial}{\partial w\varsub{j}} \log p(\vw \vert y\us{(1)}, \dots, y\us{(\intT)}, &\overline{\vx}\us{(1)}, \dots, \overline{\vx}\us{(\intT)}) \\
    &= \frac{\partial}{\partial w\varsub{j}} \left[- \frac{\beta}{2} \sum\varsub{t=1}\us{\intT} \varepsilon\ustime \right] \\
    &= \beta \sum\varsub{t=1}\us{\intT} \varepsilon\ustime x\varsub{j}\ustime = 0,
\end{align*} 
where $\varepsilon\ustime$ is the residual at time $t$, which can be written as
\begin{align*}
    \varepsilon\ustime
    &= w\varsub{1} x\varsub{1}\ustime + w\varsub{1} x\varsub{2}\ustime + \sum\varsub{k=1}\us{K} w\varsub{2\varsub{k}} x\varsub{2\varsub{k}}\ustime - y\ustime \\
    &= w\varsub{1} x\varsub{1}\ustime + w\varsub{1} x\varsub{2}\ustime + \sum\varsub{k=1}\us{K} w\varsub{2\varsub{k}} \left( x\varsub{2}\ustime + \eta\ustime\varsub{2\varsub{k}} \right) - y\ustime\\
    &= w\varsub{1} x\varsub{1}\ustime + \left(  w\varsub{2} + \sum\varsub{k=1}\us{K} w\varsub{2\varsub{k}} \right) x\varsub{2}\ustime + \sum\varsub{k=1}\us{K} w\varsub{2\varsub{k}} \eta\ustime\varsub{2\varsub{k}} - y\ustime,
\end{align*}
with $w\varsub{2\varsub{k}}$ being the weight associated with the $k$-th replicate of $X\varsub{2}\ustime$ in the latent variable model.

Thus, for every replicate $k \in [K]$, we can write
\begin{align*}
     &\sum\varsub{t=1}\us{\intT} \varepsilon\ustime \eta\varsub{2\varsub{k}}\ustime = w\varsub{1} \sum\varsub{t=1}\us{\intT} x\varsub{1}\ustime \eta\varsub{2\varsub{k}}\ustime + \bigg(w\varsub{2} \, + \sum\varsub{l=1}\us{K} w\varsub{2\varsub{l}} \bigg) \sum\varsub{t=1}\us{\intT} x\varsub{2}\ustime \eta\varsub{2\varsub{k}}\ustime + \sum\varsub{l=1}\us{K} w\varsub{2\varsub{l}} \sum\varsub{t=1}\us{\intT} \eta\ustime\varsub{2{\varsub{l}}} \eta\varsub{2\varsub{k}}\ustime - \sum\varsub{t=1}\us{\intT} y\ustime \eta\varsub{2\varsub{k}}\ustime,
\end{align*}
which equals zero since $\sum\varsub{t=1}\us{\intT} \varepsilon\ustime x\varsub{2}\ustime = 0$ for $w\varsub{2}$.
As $\eta\ustime\varsub{2{\varsub{k}}}$ is white noise and so uncorrelated with the features, target and every other $\eta\ustime\varsub{2{\varsub{l}}}$, where $l \neq k$, it holds in expectation that
\begin{align*}
     \sum\varsub{t=1}\us{\intT} \varepsilon\ustime \eta\varsub{2\varsub{k}}\ustime = w\varsub{2, k} \sum\varsub{t=1}\us{\intT} \eta\varsub{2\varsub{k}}\ustime \eta\varsub{2\varsub{k}}\ustime > 0, 
\end{align*}
so we must have $w\varsub{2\varsub{k}} = 0$ for every $k$ since the noise variance is positive. Thus, each replicate has no \textit{direct} causal effect on the model output.

\begin{figure}
\centering
\makebox[\textwidth][c]{%
    \begin{subfigure}{0.45\textwidth}
    \centering
    \begin{tikzpicture}
        \node (target) at (0, 0) [variable node] {$Y\ustime$};
        \node (X1) at (-2.4, 2) [variable node] {$X\ustime\varsub{1}$};
        \node (X2) at (-0.8, 2) [variable node] {$X\ustime\varsub{2}$};
        \node (X2r1) at (0.8, 2) [variable node] {$X\ustime\varsub{2\varsub{1}}$};
        \node (X2rK) at (2.4, 2) [variable node] {$X\ustime\varsub{2\varsub{K}}$};
        \node (dots) at (1.6, 2) [hidden node] {\footnotesize $\, \cdots$};
        \node (Z0) at (-1.6, 4) [latent node] {$Z\ustime$};
        \draw[arrow] (X1) -- (target) {};
        \draw[arrow] (X2) -- (target) {};
        \draw[arrow] (Z0) -- (X1) {};
        \draw[arrow] (Z0) -- (X2) {};
        \draw[arrow, dashed] (X2r1) -- (target) {};
        \draw[arrow, dashed] (X2rK) -- (target) {};
        \draw[arrow] (X2) -- (X2r1) {};
        \draw[arrow] (X2) to[out=45,in=135] (X2rK) {};
    \end{tikzpicture}
    \caption{Data}
    \label{fig:replication_data}
    \end{subfigure}
    \hfill
    \begin{subfigure}{0.45\textwidth}
    \centering
    \begin{tikzpicture}
        \node (loss) at (0, 0) [variable node] {$\ell\ustime$};
        \node (X1) at (-2.4, 2) [variable node] {$\tilde{X}\ustime\varsub{1}$};
        \node (X2) at (-0.8, 2) [variable node] {$\tilde{X}\ustime\varsub{2}$};
        \node (X2r1) at (0.8, 2) [variable node] {$\tilde{X}\ustime\varsub{2\varsub{1}}$};
        \node (X2rK) at (2.4, 2) [variable node] {$\tilde{X}\ustime\varsub{2\varsub{K}}$};
        \node (dots) at (1.6, 2) [hidden node] {\footnotesize $\, \cdots$};
        \node (Z0) at (-1.6, 6) [latent node] {$Z\ustime$};
        \node (Z1) at (-2.4, 4) [latent node] {$X\ustime\varsub{1}$};
        \node (Z2) at (-0.8, 4) [latent node] {$X\ustime\varsub{2}$};
        \node (Z2r1) at (0.8, 4) [latent node] {$X\ustime\varsub{2\varsub{1}}$};
        \node (Z2rK) at (2.4, 4) [latent node] {$X\ustime\varsub{2\varsub{K}}$};
        \node (dots) at (1.6, 4) [hidden node] {\footnotesize $\, \cdots$};
        \draw[arrow] (X1) -- (loss) node[black, midway, xshift=-5mm, yshift=0mm] {$w_1$};
        \draw[arrow] (X2) -- (loss) node[black, midway, xshift=4mm, yshift=0mm] {$w_2$};
        \draw[arrow, dashed] (X2r1) -- (loss) {};
        \draw[arrow, dashed] (X2rK) -- (loss) {};
        \draw[arrow] (Z0) -- (Z1) {};
        \draw[arrow] (Z0) -- (Z2) {};
        \draw[arrow] (Z2) -- (Z2r1) {};
        \draw[arrow] (Z2) to[out=45,in=135] (Z2rK) {};
        \draw[arrow] (Z1) -- (X1) {};
        \draw[arrow] (Z2) -- (X2) {};
        \draw[arrow] (Z2r1) -- (X2r1) {};
        \draw[arrow] (Z2rK) -- (X2rK) {};
    \end{tikzpicture}
    \caption{Model}
    \label{fig:replication_model}
    \end{subfigure}%
}
\vskip 1em
\makebox[\textwidth][c]{%
    \begin{subfigure}{0.45\textwidth}
    \centering
    \begin{tikzpicture}
        \node (loss) at (0, 0) [variable node] {$\ell\ustime$};
        \node (X1) at (-2.4, 2) [variable node] {$\tilde{X}\ustime\varsub{1}$};
        \node (X2) at (-0.8, 2) [variable node] {$\tilde{X}\ustime\varsub{2}$};
        \node (X2r1) at (0.8, 2) [observed node] {$\tilde{x}\ustime\varsub{2\varsub{1}}$};
        \node (X2rK) at (2.4, 2) [observed node] {$\tilde{x}\ustime\varsub{2\varsub{K}}$};
        \node (dots) at (1.6, 2) [hidden node] {\footnotesize $\, \cdots$};
        \node (Z0) at (-1.6, 6) [latent node] {$Z\ustime$};
        \node (Z1) at (-2.4, 4) [latent node] {$X\ustime\varsub{1}$};
        \node (Z2) at (-0.8, 4) [latent node] {$X\ustime\varsub{2}$};
        \node (Z2r1) at (0.8, 4) [latent node] {$X\ustime\varsub{2\varsub{1}}$};
        \node (Z2rK) at (2.4, 4) [latent node] {$X\ustime\varsub{2\varsub{K}}$};
        \node (dots) at (1.6, 4) [hidden node] {\footnotesize $\, \cdots$};
        \draw[arrow] (X1) -- (loss) node[black, midway, xshift=-5mm, yshift=0mm] {$w_1$};
        \draw[arrow] (X2) -- (loss) node[black, midway, xshift=4mm, yshift=0mm] {$w_2$};
        \draw[arrow, dashed] (X2r1) -- (loss) {};
        \draw[arrow, dashed] (X2rK) -- (loss) {};
        \draw[arrow] (Z0) -- (Z1) {};
        \draw[arrow] (Z0) -- (Z2) {};
        \draw[arrow] (Z2) -- (Z2r1) {};
        \draw[arrow] (Z2) to[out=45,in=135] (Z2rK) {};
        \draw[arrow] (Z1) -- (X1) {};
        \draw[arrow] (Z2) -- (X2) {};
        \draw[arrow] (Z2r1) -- (X2r1) {};
        \draw[arrow] (Z2rK) -- (X2rK) {};
    \end{tikzpicture}
    \caption{Observation}
    \label{fig:replication_observation}
    \end{subfigure}
    \hfill
    \begin{subfigure}{0.45\textwidth}
    \centering
        \begin{tikzpicture}
        \node (loss) at (0, 0) [variable node] {$\ell\ustime$};
        \node (X1) at (-2.4, 2) [variable node] {$\tilde{X}\ustime\varsub{1}$};
        \node (X2) at (-0.8, 2) [variable node] {$\tilde{X}\ustime\varsub{2}$};
        \node (X2r1) at (0.8, 2) [observed node] {$\tilde{x}\ustime\varsub{2\varsub{1}}$};
        \node (X2rK) at (2.4, 2) [observed node] {$\tilde{x}\ustime\varsub{2\varsub{K}}$};
        \node (dots) at (1.6, 2) [hidden node] {\footnotesize $\, \cdots$};
        \node (Z0) at (-1.6, 6) [latent node] {$Z\ustime$};
        \node (Z1) at (-2.4, 4) [latent node] {$X\ustime\varsub{1}$};
        \node (Z2) at (-0.8, 4) [latent node] {$X\ustime\varsub{2}$};
        \node (Z2r1) at (0.8, 4) [latent node] {$X\ustime\varsub{2\varsub{1}}$};
        \node (Z2rK) at (2.4, 4) [latent node] {$X\ustime\varsub{2\varsub{K}}$};
        \node (dots) at (1.6, 4) [hidden node] {\footnotesize $\, \cdots$};
        \draw[arrow] (X1) -- (loss) node[black, midway, xshift=-5mm, yshift=0mm] {$w_1$};
        \draw[arrow] (X2) -- (loss) node[black, midway, xshift=4mm, yshift=0mm] {$w_2$};
        \draw[arrow] (Z0) -- (Z1) {};
        \draw[arrow] (Z0) -- (Z2) {};
        \draw[arrow] (Z2) -- (Z2r1) {};
        \draw[arrow] (Z2) to[out=45,in=135] (Z2rK) {};
        \draw[arrow] (Z1) -- (X1) {};
        \draw[arrow] (Z2) -- (X2) {};
    \end{tikzpicture}
    \caption{Intervention}
    \label{fig:replication_intervention}
    \end{subfigure}%
}
\caption{Effects induced by replicating $X\varsub{2}\ustime$. For brevity we have omitted the direct path $\hat{Y}\ustime \to \ell\ustime$.}
\label{fig:replication}
\end{figure}

Despite this, replicates will still be rewarded by the observational lift due to indirect causal effects, which give rise to correlations between them and the target. Specifically, without any replication, (i.e., $K=0$), since $w_1 = w_2$, the reward to each support agent will be $r\varsub{c}\ustime/2$, where recall $r\varsub{c}\ustime$ is the market revenue. For $K > 0$, with the same logic the reward allocated to each feature is $r\varsub{c}\ustime/(2 + K)$ since even though no direct effects exist, contributions are split equally via the backdoor paths. Thus the resulting rewards are  
\begin{align*}
    r\varsub{a\varsub{1}}\ustime = \frac{r\varsub{c}\ustime}{2 + K} \quad \text{and} \quad r\varsub{a\varsub{2}}\ustime = \sum\varsub{k=1}\us{1 + K} \frac{r\varsub{c}\ustime}{2 + K} = \frac{r\varsub{c}\ustime (1 + K)}{2 + K},
\end{align*}
hence $a_2$ can maliciously replicate their data many times and increase their overall reward, whilst diminishing that of $a_1$, since $ r\varsub{a\varsub{1}}\ustime \to 0$ as $K \to \infty$.

Let $\overline{\vx}\ustime \in \mathbb{R}^{M + D + \sum\varsub{i\in\Omega} K_i}$ be the augmented feature vector, with an extended index set $\overline{\Omega}$, after support agent $a \in \mathcal{A}_{-c}$ replicates feature $i$ a total of $K_i$ times. 
In \citet{agarwal2019marketplace}, the Shapley value is modified to penalize similar features. Specifically, they propose \textit{Robust-Shapley}:
\begin{align*}
    \phi_{i}\us{(t), \text{robust}} = \phi_{i}\ustime \exp \left(-\gamma \sum_{j \in \Omega} \texttt{sim}\left({X_{i}\ustime, X_{j}\ustime} \right) \right),
\end{align*}
where $\texttt{sim}(\cdot,\cdot)$ is some measure of similarity (e.g., cosine similarity). This penalizes similar features so as to remove any incentive for replication, satisfying the following definition of replication robustness.

\begin{definition}[\textbf{Weakly Replication-robust}] \label{def:weak_robustness} The regression market is \textbf{weakly} robust to replication if $\overline{r}\ustime\varsub{a} \leq r\ustime\varsub{a}$, where $\overline{r}\ustime\varsub{a}$ is the reward of $a \in \mathcal{A}\varsub{-c}$ as described in \eqref{eq:reward_rule} after using $\overline{\vx}\ustime$ instead.
\end{definition}

This implies that agents who submit replicates should obtain weakly less reward than before. 
However, the penalty applies not only to replicated features but also to those that are naturally correlated. This results in a loss of budget balance, the extent of which depends on the similarity metric and the value of $\gamma$. Moreover, Definition~\ref{def:weak_robustness} remains vulnerable to spiteful agents, which is why we refer to this definition as \textit{weakly} robust. A similar result appears in \citet{han2022replication}, who show that using the Banzhaf value \citep{lehrer1988axiomatization} instead inherently leads to weak replication-robustness. 

We now introduce the following stronger definition of robustness to replication.

\begin{definition}[\textbf{Strictly Replication-robust}] \label{def:strict_robustness} The regression market is \textbf{strictly} robust to replication if $\overline{r}\ustime\varsub{a} = r\ustime\varsub{a}$.
\end{definition}

If an allocation policy is \textit{strictly} replication-robust, rewards remain unchanged when features are replicated, removing any incentive to do so and providing protection against spiteful agents. 

\begin{proposition}
    With the proposed regression framework and Shapley value-based revenue allocation, regression markets using the interventional lift are strictly replication-robust.
\end{proposition}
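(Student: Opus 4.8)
The plan is to show that, under the interventional lift, every replicate submitted by a support agent is a \emph{null player} in the underlying coalitional game, and that the market revenue $r_c^{(t)}$ is itself unaffected by replication; strict replication-robustness (Definition~\ref{def:strict_robustness}) then follows from the null-player and efficiency axioms of the Shapley value together with its invariance to the adjunction of null players. Throughout I would work with the augmented feature vector $\overline{\vx}^{(t)}$ and augmented index set $\overline{\Omega}$ obtained after agent $a$ appends $K_i$ replicates of each $i \in \Omega_a$, writing $K = \sum_{i} K_i$ and $\overline{\xi}^{(t),\mathrm{int}}$ for the interventional characteristic function of the enlarged game.

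First I would invoke the fact established earlier in this section (via the MAP stationarity conditions, using that the obfuscating noise $\eta_{i_k}^{(t)}$ is centred, of positive variance, and uncorrelated with the features, the target, and the other replication noises) that every replicate receives zero weight in the fitted model, $w_{i_k} = 0$, in expectation over $\eta_{i_k}^{(t)}$. Combined with the observation from Theorem~\ref{the:causal_effects} and the discussion of \citet{janzing2020feature} that the interventional (hence marginal) lift credits only features on which the prediction has an explicit algebraic dependence, this yields the central claim: for any coalition $S \subseteq \overline{\Omega}$ and any replicate $i_k \notin S$ we have $\overline{\xi}^{(t),\mathrm{int}}_{S \cup i_k} = \overline{\xi}^{(t),\mathrm{int}}_{S}$. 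The reason is that the prediction $\hat y = \vw^\top \overline{\vx}^{(t)}$ does not depend on the coordinate $x_{i_k}^{(t)}$ when $w_{i_k} = 0$, so $\ell$ is constant in that coordinate; consequently fixing $x_{i_k}^{(t)}$ at its observed value (when $i_k$ joins the coalition) returns exactly the same number as integrating $x_{i_k}^{(t)}$ out against its marginal (when $i_k$ is out of coalition), since the interventional conditional coincides with the marginal. The same reasoning shows $\overline{\xi}^{(t),\mathrm{int}}_S$ depends on $S$ only through $S \cap \Omega$ and equals the original $\xi^{(t),\mathrm{int}}_{S \cap \Omega}$, and that neither the in-sample nor the out-of-sample predictions change under replication, so the market revenue $r_c^{(t)} = \lambda(\E[\ell_\varnothing]^{(t)} - \E[\ell_\Omega]^{(t)})$ is invariant: $\ell_\varnothing$ involves only the central agent's features, and $\ell_{\overline{\Omega}} = \ell_\Omega$ once the zero-weight replicates are discarded.

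Next I would push this through the Shapley formula \eqref{eq:shapley_ml}. Each replicate, being a null player, gets $\overline{\phi}^{(t)}_{i_k} = 0$. For an original feature $j \in \Omega$, I would partition the coalitions $S \subseteq \overline{\Omega}\setminus j$ in the augmented Shapley sum according to $T := S \cap \Omega$; since the marginal contributions of $j$ depend on $S$ only through $T$, the augmented Shapley value equals $\sum_{T \subseteq \Omega\setminus j}\delta_j^{(t)}(T)\sum_{r=0}^{K}\binom{K}{r}\frac{(|T|+r)!\,(D+K-|T|-r-1)!}{(D+K)!}$, and the inner sum collapses to the original weight $\frac{|T|!\,(D-|T|-1)!}{D!}$ by the Beta-integral identity, writing each term as $\int_0^1 x^{|T|+r}(1-x)^{D+K-|T|-r-1}\,dx$ and summing the binomial expansion $\sum_r \binom{K}{r}x^r(1-x)^{K-r} = 1$. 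Hence $\overline{\phi}^{(t)}_j = \phi^{(t)}_j$ for every $j \in \Omega$, so with $r_c^{(t)}$ unchanged the reward of any support agent $a$ satisfies $\overline{r}^{(t)}_a = r_c^{(t)}\sum_{i \in \overline{\Omega}_a}\E[\overline{\phi}_i]^{(t)} = r_c^{(t)}\sum_{i \in \Omega_a}\E[\phi_i]^{(t)} = r^{(t)}_a$, which is exactly strict replication-robustness.

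The main obstacle I anticipate is not the combinatorics but pinning down rigorously that replicates are null players for the \emph{interventional} characteristic function — that is, formalising the interchange ``integrate a zero-weight coordinate against its marginal $=$ fix it at its observed value'' and checking it holds uniformly over all coalitions, which in turn rests on the model being trained once on the full augmented feature set so that $w_{i_k} = 0$ irrespective of the coalition being scored. A secondary point to treat carefully is that $w_{i_k} = 0$ holds only in expectation over the obfuscating noise, so the equalities above — and hence Definition~\ref{def:strict_robustness} — should be read at the level of the expected allocations $\E[\phi_i]^{(t)}$ that actually enter the reward rule \eqref{eq:reward_rule}, exactly as in the observational analysis above, where this caveat was already immaterial to the conclusion.
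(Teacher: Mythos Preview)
Your proof is correct and follows the same core argument as the paper: replicates receive zero weight in the fitted model (hence only indirect effects on the score), and since the interventional lift captures only direct effects (Theorem~\ref{the:causal_effects}), every replicate is a null player whose Shapley value vanishes, leaving rewards unchanged. You go further than the paper by explicitly verifying via the Beta-integral identity that the original features' Shapley values are preserved under the adjunction of null players, by checking that $r_c^{(t)}$ itself is invariant, and by flagging the in-expectation caveat---details the paper leaves implicit when it simply asserts that ``for the original features, any direct effects will remain unchanged.''
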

\begin{proof}
    With Definition~\ref{def:replicate}, each replicate in $\bar{\vx}\us{(t)}$ only induces an indirect effect on the target. However, from Theorem~\ref{the:causal_effects}, we know that the interventional lift only captures direct effects. Therefore, for each of the replicates, we write the marginal contribution for a single permutation $\theta \in \Theta$ as
    \begin{align*}
        \delta\varsub{i}\us{(t), \textrm{int}} (\theta) &= \xi\us{(t), \textrm{int}}\varsub{\ubar{\omega}} - \xi\us{(t), \textrm{int}}\varsub{\ubar{\omega} \cup i}, \\
        &\int  \ell ( \vx\varsub{\ubar{\omega} }\us{(t)}, \vx\varsub{\bar{\omega} \cup i }\us{(t)}) p( \vx\varsub{\bar{\omega} \cup i}\us{(t)} \vert \vx\varsub{\ubar{\omega}}\us{(t)}) d\vx\varsub{\bar{\omega} \cup i}\us{(t)} -\int  \ell ( \vx\varsub{\ubar{\omega} \cup i }\us{(t)}, \vx\varsub{\bar{\omega} }\us{(t)}) p(\vx\varsub{\bar{\omega}}\us{(t)} \vert \vx\varsub{\ubar{\omega}}\us{(t)}) d\vx\varsub{\bar{\omega}}\us{(t)}, \\
        &= 0, \quad \forall i \in \overline{\Omega} \setminus \Omega,
    \end{align*}
    and therefore $\phi_{i} \propto  \sum_{\theta \in \Theta} \Delta_i (\theta) = 0$ for each of the replicates. For the original features, any direct effects will remain unchanged. This leads to
    \begin{align*}
        \overline{r}\varsub{a}\us{(t)} &= \sum_{i \in \Omega} \lambda \mathbb{E} [ \phi_i ]\us{(t)} + \sum_{j \in \overline{\Omega} \setminus \Omega} \lambda \underbracket{\mathbb{E} [ \phi_j ]\us{(t)}}_{=0} = r_{a}\us{(t)}, \ \forall a \in \mathcal{A}_{-c},
    \end{align*}
    showing that by replacing the conventional observational lift with the interventional lift, the Shapley value-based allocation is robust to replication \textit{and} spitefulness by design, by removing the backdoor paths as illustrated in Figure~\ref{fig:replication_intervention} compared with Figure~\ref{fig:replication_observation}.
\end{proof}

\section{Experimental Analysis} \label{sec:experimental_analysis}
We now validate our findings on a real-world case study.\footnote{Our code has been made available at: \url{https://github.com/tdfalc/regression-markets}}
We use an open source dataset to facilitate reproduction of our work, namely the Wind Integration National Dataset (WIND)
Toolkit, detailed in \citet{draxl2015wind}.
Our setup is a stylised continuous electricity market where agents---in our case, wind producers---need to notify the system operator of their expected electricity generation in a forward stage, one hour ahead of delivery, for which they receive a fixed price per unit. In real-time, they receive a penalty for deviations from the scheduled production, thus their downstream revenue is an explicit function of forecast accuracy.

\paragraph{\textbf{Data Description}}
This dataset contains wind power measurements simulated for 9 wind farms in South Carolina (USA), all located within 150 km of each other---see Table~\ref{tab:freq} for a characteristic overview. Although this data is not exactly \textit{real}, it effectively captures the spatio-temporal aspects of wind power production, with the added benefit of remaining free from any spurious measurements, as can often be the case with real-world datasets. Measurements are available for a period of 7 years, from 2007 to 2013, with an hourly granularity, which we normalize to take values in the range of $[0, 1]$. 

Each wind farm is considered a market agent. For simplicity, we let $a_1$ be the central agent, however in practice each could assume this role in parallel. We assume each agent to have only $1$ feature, namely the $1$-hour lag of their power measurements---for wind power forecasting, the lag not only captures the temporal correlations of the production at a specific site, but also indirectly encompasses the spatial dependencies amongst neighboring sites due to the natural progression of wind. To illustrate this, we plot the location of each site in Figure~\ref{fig:map}.
We see that the measurements at sites directly neighbouring $a_1$ have the largest dependency, which then decreases for the sites further away. 

\begin{figure}
\centering
\includegraphics[width=\textwidth]{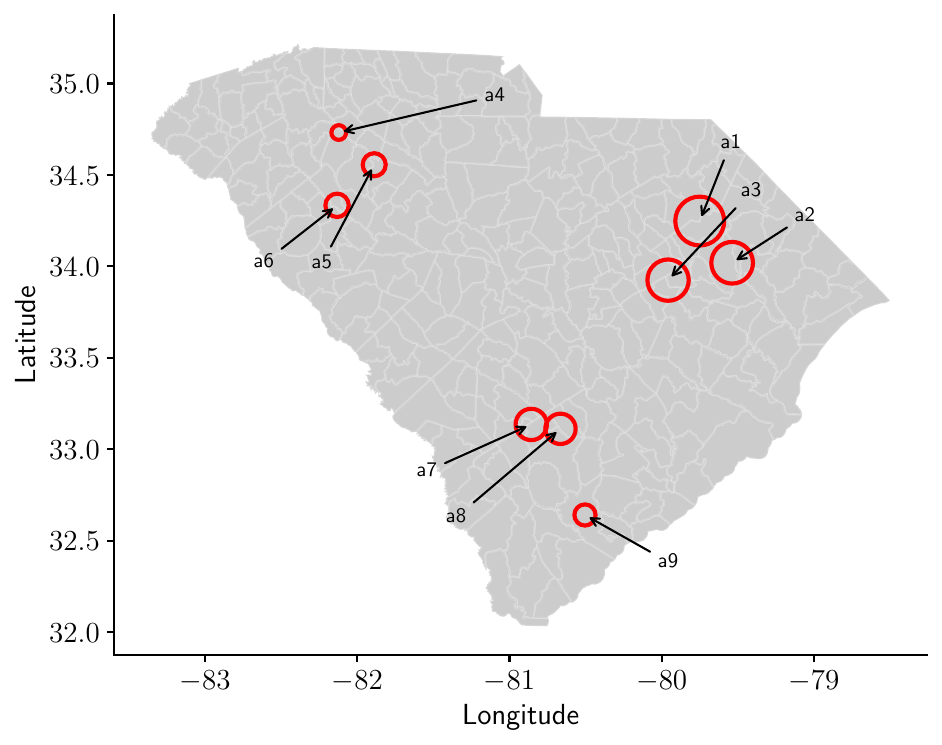}
\caption{Geographic location of each wind farm. The point sizes indicate the relative correlation between the measurements at each site and that of the central agent, $a_1$.}
\label{fig:map}
\end{figure}

\begin{table}[!t]
  \centering
  \caption{Agents and corresponding site characteristics considered in South Carolina (USA). $C_{\textrm{f}}$ denotes the capacity factor and $P$ the nominal capacity. The identify number is that from the WIND Toolkit database.}
  \vspace{5mm}
  \label{tab:freq}
  \begin{tabular}{lrrrr}
    \toprule
    Agent & Id. & $C_{\textrm{f}}$ (\%)  & $P$ (MW) \\
    \midrule
    $a_1$ & 4456 & 34.11 & 1.75 \\
    $a_2$ & 4754 & 35.75 & 2.96 \\
    $a_3$ & 4934 & 36.21 & 3.38 \\
    $a_4$ & 4090 & 26.60 & 16.11 \\
    $a_5$ & 4341 & 28.47 & 37.98 \\
    $a_6$ & 4715 & 27.37 & 30.06 \\
    $a_7$ & 5730 & 34.23 & 2.53 \\
    $a_8$ & 5733 & 34.41 & 2.60 \\
    $a_9$ & 5947 & 34.67 & 1.24 \\
  \bottomrule
\end{tabular}
\end{table}

\paragraph{\textbf{Methodology}}
We use the regression framework described in Section~\ref{sec:preliminaries}, with an \textit{Auto-Regressive with eXogenous input} model, such that each agent is assumed to own a single feature, namely a 1-hour lag of their power measurement.
We are interested in assessing market outcomes rather than competing with state-of-the-art forecasting methods, so we use a very short-term lead time (i.e., 1-hour ahead), permitting fairly simple time-series analyses.
We focus on assessing rewards rather than competing with state-of-the-art forecasting methods, so we use a very short-term lead time, permitting fairly simple time-series analyses.
Nevertheless, our mechanism readily allows more complex models for those aiming to capture specific intricacies of wind power production, for instance the bounded extremities of the power curve \citep{pinson2012very}. 

We perform a pre-screening, such that given the redundancy between the lagged measurements of $a_2$ and $a_3$ with that of $a_1$, we remove them from the market in line with our assumptions. At every time step, once a new observation of the target signal arrives, the previous time step’s forecast is applied for out-of-sample market clearing. Simultaneously, the posterior is updated, the in-sample market is cleared, and a forecast for the next time step is generated.
We clear both markets considering each agent is honest, that is, they each provide a single report of their true data. Next, we re-clear the markets, but this time assuming agent $a_4$ is malicious, replicating their data, thereby submitting multiple separate features to the market to increase their revenue. This problem size doesn't require approximate Shapley values, but recall findings hold either way, and generalize theoretically to arbitrary numbers of agents. 


\begin{figure}[!t]
    \centering
    \begin{subfigure}{\textwidth}
        \centering
        \includegraphics[width=\linewidth]{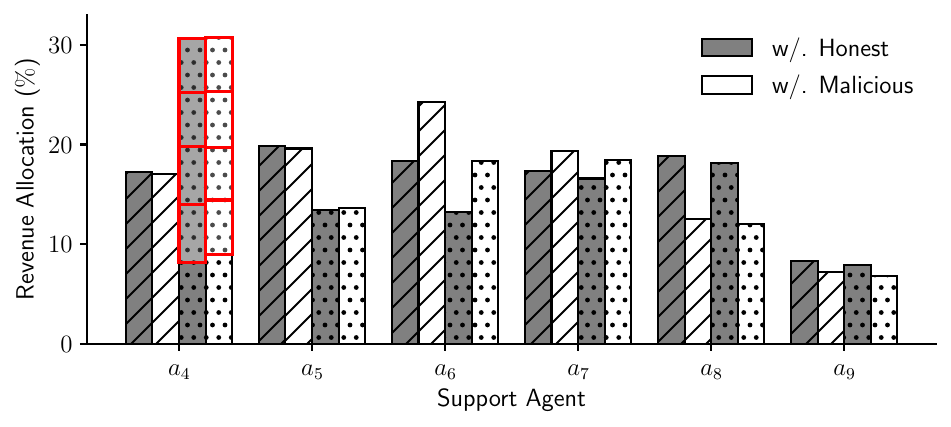}
        \caption{\textit{Observational}: Revenue of $a_4$ increases due to indirect effects induced by the replicates.}
        \label{fig:observational}
    \end{subfigure}
    \hfill
    \begin{subfigure}{\textwidth}
        \centering
        \includegraphics[width=\linewidth]{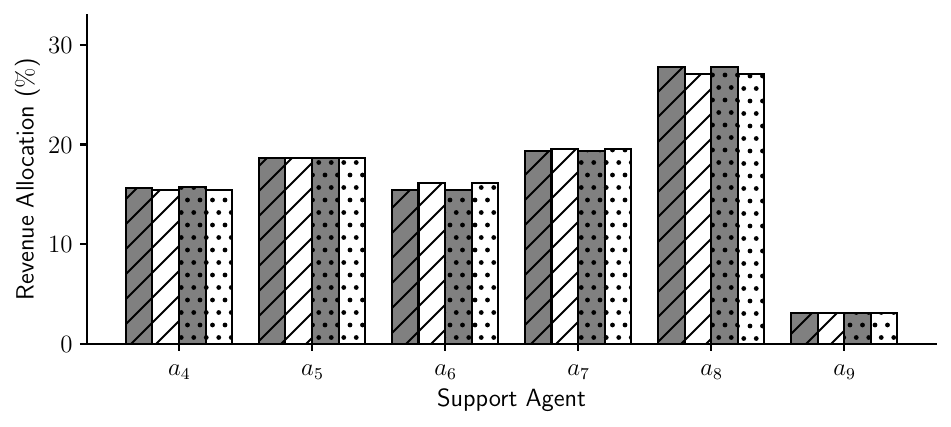}
        \caption{\textit{Interventional}: Revenue of $a_4$ remains the same by accounting only for direct effects.}
        \label{fig:interventional}
    \end{subfigure}
    \caption{Revenue allocations for each support agent. Results for both (a) observational and (b) interventional lifts, when agent $a_4$ is honest ($\slash \slash$) and malicious ($\circ$) by replicating their feature. The blue and green bars correspond to in-sample and out-of-sample market stages, respectively. The revenue split amongst replicates is depicted by the stacked bars highlighted in red.}
    \label{fig:revenue_allocaions}
\end{figure}

\paragraph{\textbf{Results}} We set the central agent's to valuation to $\lambda = 0.5$ USD per time step and per unit improvement in $h$, for both in-sample and out-of-sample market stages. However, we are primarily interested in reward allocation rather than the magnitude---see \citet{pinson2022regression} for a complete analysis of the monetary incentive to each agent participating in the market. Overall the expected in-sample and out-of-sample losses improved by 10.6\% and 13.3\% respectively with the help of the support agents. This improvement is unaffected bu the number of replicates, since they provide no additional information.

Setting $K=4$, in Figure~\ref{fig:revenue_allocaions}, we plot the expected allocation for each agent both with and without the malicious behavior of agent $a_4$, for each lift. When $a_4$ is honest, we observe that the observational lift spreads credit relatively evenly amongst features, suggesting that many of them have similar indirect effects on the target. The interventional lift favours agents $a_7$ and $a_8$, which, as one would expect, own the features with the most spatial correlation with the target. In this market, most of the additional revenue of agent $a_8$ appears to be lost from agent $a_9$ compared with the observational lift, suggesting that whilst these features are correlated, it is agent $a_8$ with the greatest direct effect, which is intuitive given their geographic location.

\begin{figure}
\centering
\includegraphics[width=\textwidth]{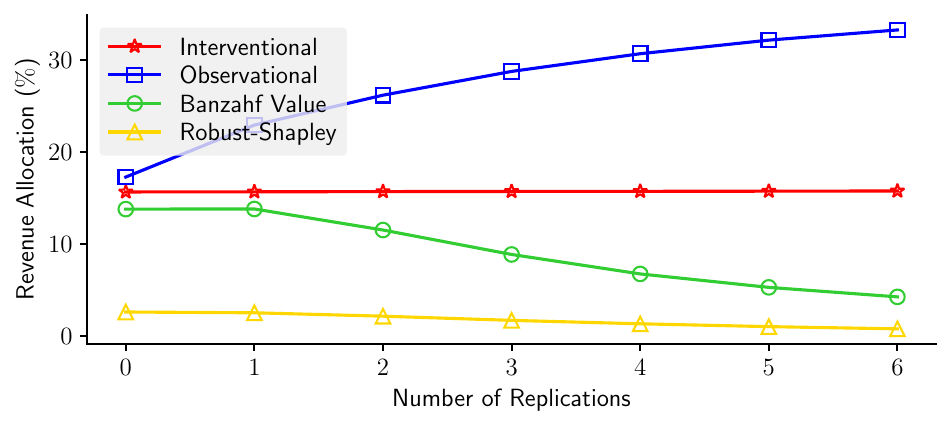}
\caption{Revenue allocation of agent $a_4$ with increasing number of replicates.}
\label{fig:comparison}
\end{figure}

When agent $a_4$ replicates their data, with the observational lift, agents $a_5$ to $a_8$ earn less, whilst agent $a_4$ earns more. This shows that this lift indeed spreads rewards proportionally amongst indirect effects, of which there are four more due to the replicates, and so the malicious agent out-earns the others. Since the interventional lift only attributes direct effects, each replicate gets zero reward, so the malicious agent is no better off than before. Rewards were consistent between in-sample and out-of-sample, likely due to the large sample size and limited nonstationarities within the data.

To compare our work against current literature, in Figure~\ref{fig:comparison} we plot the allocation of agent $a_4$ with increasing number of replicates. Here, \textit{Robust-Shapley} and \textit{Banzahf Value} refer to both the penalization approach of \citet{agarwal2019marketplace} and the use of another semivalue in \citet{han2022replication}, respectively. With the observational lift, the proportion of revenue obtained increases with the number of replicates, as in the previous experiment. With \textit{Robust-Shapley}, the allocation indeed decreases with the number of replicates, demonstrating this approach is \textit{weakly} replication-robust, but is considerably less compared with the other approaches since natural similarities are also penalized. The authors argue this is an incentive for provision of unique information, but this allows agents to be spiteful. The \textit{Banzahf Value} is strictly robust to replication for $K=1$, but only weakly for $K\geq2$. Lastly, unlike these methods, our proposed use of the interventional lift remains strictly replication-robust throughout as expected, with agent $a_4$ not able to benefit from replicating their feature, without penalizing the other agents.

\section{Conclusions} \label{sec:conclusions}
Many machine learning tasks could benefit from using the data owned by others, however convincing firms to share information, even if privacy is assured, poses a considerable challenge. Rather than relying on data altruism, analytics markets are recognized as a promising way of providing incentives for data sharing, many of which use Shapley values to allocate revenue. Nevertheless, there are a number of open challenges that remain before such mechanisms can be used in practice, one of which is vulnerability to strategic replication, which we showed leads to undesirable reward allocation and restricts the practical viability of these markets.

We introduced a general framework for analytics markets for supervised learning problems that subsumes many of these existing proposals. We demonstrated that there are several different ways to formulate a machine learning task as cooperative game and analysed their differences from a causal perspectives. We showed that use of the observational lift to value a coalition is the source of these replication incentives, which many works have tried to remedy through penalization methods, which facilitate only \textit{weak} robustness. Our main contribution is an alternative algorithm for allocating rewards that instead uses interventional conditional probabilities. Our proposal is robust to replication without comprising market properties such as budget balance. This is a step towards making Shapley value-based analytics markets feasible in practice.

From a causal perspective, the interventional lift has additional potential benefits, including reward allocations that better represent the reliance of the model on each feature, providing an incentive for timely and reliable data streams for useful features, that is, those with greater influence on predictive performance. It is also favorable with respect to computational expenditure. 
That said, when it comes to data valuation, the Shapley value is not without its limitations---it is not generally well-defined in a machine learning context and requires strict assumptions, not to mention its computational complexity. This should incite future work into alternative mechanism design frameworks, for example those based on non-cooperative game theory instead.

\section*{Acknowledgements}
We are grateful to Shobhit Singhal (DTU) for valuable discussions and insightful feedback, which contributed to strengthening the rigor of this work.

\bibliographystyle{plainnat}
\bibliography{bib}

\begin{thebibliography}{36}
\providecommand{\natexlab}[1]{#1}
\providecommand{\url}[1]{\texttt{#1}}
\expandafter\ifx\csname urlstyle\endcsname\relax
  \providecommand{\doi}[1]{doi: #1}\else
  \providecommand{\doi}{doi: \begingroup \urlstyle{rm}\Url}\fi

\bibitem[Agarwal et~al.(2019)Agarwal, Dahleh, and Sarkar]{agarwal2019marketplace}
Anish Agarwal, Munther Dahleh, and Tuhin Sarkar.
\newblock A marketplace for data: An algorithmic solution.
\newblock In \emph{Proceedings of the 2019 ACM Conference on Economics and Computation}, pages 701--726, 2019.

\bibitem[Bergemann and Bonatti(2019)]{bergemann2019markets}
Dirk Bergemann and Alessandro Bonatti.
\newblock Markets for information: An introduction.
\newblock \emph{Annual Review of Economics}, 11:\penalty0 85--107, 2019.

\bibitem[Bishop(1995)]{bishop1995training}
Chris~M Bishop.
\newblock Training with noise is equivalent to tikhonov regularization.
\newblock \emph{Neural computation}, 7\penalty0 (1):\penalty0 108--116, 1995.

\bibitem[Castro et~al.(2009)Castro, G{\'o}mez, and Tejada]{castro2009polynomial}
Javier Castro, Daniel G{\'o}mez, and Juan Tejada.
\newblock Polynomial calculation of the shapley value based on sampling.
\newblock \emph{Computers \& operations research}, 36\penalty0 (5):\penalty0 1726--1730, 2009.

\bibitem[Chalkiadakis et~al.(2011)Chalkiadakis, Elkind, and Wooldridge]{chalkiadakis2011computational}
Georgios Chalkiadakis, Edith Elkind, and Michael Wooldridge.
\newblock Computational aspects of cooperative game theory.
\newblock \emph{Synthesis Lectures on Artificial Intelligence and Machine Learning}, 5\penalty0 (6):\penalty0 1--168, 2011.

\bibitem[Covert et~al.(2021)Covert, Lundberg, and Lee]{covert2021explaining}
Ian~C Covert, Scott Lundberg, and Su-In Lee.
\newblock Explaining by removing: A unified framework for model explanation.
\newblock \emph{The Journal of Machine Learning Research}, 22\penalty0 (1):\penalty0 9477--9566, 2021.

\bibitem[Dekel et~al.(2010)Dekel, Fischer, and Procaccia]{dekel2010regression}
Ofer Dekel, Felix Fischer, and Ariel~D. Procaccia.
\newblock Incentive compatible regression learning.
\newblock \emph{{Journal of Computer and System Sciences}}, 76\penalty0 (8):\penalty0 759--777, 2010.

\bibitem[Deng and Papadimitriou(1994)]{deng1994complexity}
Xiaotie Deng and Christos~H Papadimitriou.
\newblock On the complexity of cooperative solution concepts.
\newblock \emph{Mathematics of operations research}, 19\penalty0 (2):\penalty0 257--266, 1994.

\bibitem[Draxl et~al.(2015)Draxl, Clifton, Hodge, and McCaa]{draxl2015wind}
Caroline Draxl, Andrew Clifton, Bri-Mathias Hodge, and Jim McCaa.
\newblock The wind integration national dataset (wind) toolkit.
\newblock \emph{Applied Energy}, 151:\penalty0 355--366, 2015.

\bibitem[Dubey et~al.(1981)Dubey, Neyman, and Weber]{dubey1981value}
Pradeep Dubey, Abraham Neyman, and Robert~James Weber.
\newblock Value theory without efficiency.
\newblock \emph{Mathematics of Operations Research}, 6\penalty0 (1):\penalty0 122--128, 1981.

\bibitem[Falconer et~al.(2024)Falconer, Kazempour, and Pinson]{falconer2024bayesian}
Thomas Falconer, Jalal Kazempour, and Pierre Pinson.
\newblock Bayesian regression markets.
\newblock \emph{Journal of Machine Learning Research}, 25\penalty0 (180):\penalty0 1--38, 2024.
\newblock URL \url{http://jmlr.org/papers/v25/23-1385.html}.

\bibitem[Gal-Or(1985)]{gal1985information}
Esther Gal-Or.
\newblock Information sharing in oligopoly.
\newblock \emph{Econometrica: Journal of the Econometric Society}, pages 329--343, 1985.

\bibitem[Han et~al.(2023)Han, Wooldridge, Rogers, Ohrimenko, and Tschiatschek]{han2022replication}
Dongge Han, Michael Wooldridge, Alex Rogers, Olga Ohrimenko, and Sebastian Tschiatschek.
\newblock Replication robust payoff allocation in submodular cooperative games.
\newblock \emph{IEEE Transactions on Artificial Intelligence}, 4\penalty0 (5):\penalty0 1114--1128, 2023.

\bibitem[Janzing et~al.(2020)Janzing, Minorics, and Bl{\"o}baum]{janzing2020feature}
Dominik Janzing, Lenon Minorics, and Patrick Bl{\"o}baum.
\newblock Feature relevance quantification in explainable ai: A causal problem.
\newblock In \emph{International Conference on Artificial Intelligence and Statistics}, pages 2907--2916. PMLR, 2020.

\bibitem[Koutsopoulos et~al.(2015)Koutsopoulos, Gionis, and Halkidi]{koutsopoulos2015auctioning}
Iordanis Koutsopoulos, Aristides Gionis, and Maria Halkidi.
\newblock Auctioning data for learning.
\newblock In \emph{2015 IEEE International Conference on Data Mining Workshop (ICDMW)}, pages 706--713. IEEE, 2015.

\bibitem[Kumar et~al.(2020)Kumar, Venkatasubramanian, Scheidegger, and Friedler]{kumar2020problems}
I~Elizabeth Kumar, Suresh Venkatasubramanian, Carlos Scheidegger, and Sorelle Friedler.
\newblock Problems with shapley-value-based explanations as feature importance measures.
\newblock In \emph{International Conference on Machine Learning}, pages 5491--5500, 2020.

\bibitem[Lehrer(1988)]{lehrer1988axiomatization}
Ehud Lehrer.
\newblock An axiomatization of the banzhaf value.
\newblock \emph{International Journal of Game Theory}, 17:\penalty0 89--99, 1988.

\bibitem[Liu(2020)]{liu2020absolute}
Jinfei Liu.
\newblock Absolute shapley value, 2020.

\bibitem[Lundberg and Lee(2017)]{lundberg2017unified}
Scott~M Lundberg and Su-In Lee.
\newblock A unified approach to interpreting model predictions.
\newblock \emph{Advances in Neural Information Processing Systems}, 30, 2017.

\bibitem[Merrill et~al.(2019)Merrill, Ward, Kamkar, Budzik, and Merrill]{merrill2019generalized}
John Merrill, Geoff Ward, Sean Kamkar, Jay Budzik, and Douglas Merrill.
\newblock Generalized integrated gradients: A practical method for explaining diverse ensembles, 2019.

\bibitem[Mitchell et~al.(2022)Mitchell, Cooper, Frank, and Holmes]{mitchell2022sampling}
Rory Mitchell, Joshua Cooper, Eibe Frank, and Geoffrey Holmes.
\newblock Sampling permutations for shapley value estimation.
\newblock \emph{Journal of Machine Learning Research}, 23\penalty0 (43):\penalty0 1--46, 2022.

\bibitem[Ohrimenko et~al.(2019)Ohrimenko, Tople, and Tschiatschek]{ohrimenko2019collaborative}
Olga Ohrimenko, Shruti Tople, and Sebastian Tschiatschek.
\newblock Collaborative machine learning markets with data-replication-robust payments, 2019.
\newblock URL \url{https://arxiv.org/abs/1911.09052}.

\bibitem[Owen and Prieur(2017)]{owen2017shapley}
Art~B Owen and Cl{\'e}mentine Prieur.
\newblock On shapley value for measuring importance of dependent inputs.
\newblock \emph{SIAM/ASA Journal on Uncertainty Quantification}, 5\penalty0 (1):\penalty0 986--1002, 2017.

\bibitem[Pearl(2010)]{pearl2010introduction}
Judea Pearl.
\newblock An introduction to causal inference.
\newblock \emph{The international journal of biostatistics}, 6\penalty0 (2), 2010.

\bibitem[Pearl(2012)]{pearl2012calculus}
Judea Pearl.
\newblock The do-calculus revisited.
\newblock In \emph{Proceedings of the Twenty-Eighth Conference on Uncertainty in Artificial Intelligence}, UAI'12, page 3–11, Arlington, Virginia, USA, 2012. AUAI Press.
\newblock ISBN 9780974903989.

\bibitem[Pinson(2012)]{pinson2012very}
Pierre Pinson.
\newblock Very-short-term probabilistic forecasting of wind power with generalized logit--normal distributions.
\newblock \emph{Journal of the Royal Statistical Society: Series C (Applied Statistics)}, 61\penalty0 (4):\penalty0 555--576, 2012.

\bibitem[Pinson et~al.(2022)Pinson, Han, and Kazempour]{pinson2022regression}
Pierre Pinson, Liyang Han, and Jalal Kazempour.
\newblock Regression markets and application to energy forecasting.
\newblock \emph{TOP}, 30\penalty0 (3):\penalty0 533--573, 2022.

\bibitem[Qazaz et~al.(1997)Qazaz, Williams, and Bishop]{qazaz1997upper}
Cazhaow~S Qazaz, Christopher~KI Williams, and Christopher~M Bishop.
\newblock An upper bound on the bayesian error bars for generalized linear regression.
\newblock In \emph{Mathematics of Neural Networks: Models, Algorithms and Applications}, pages 295--299. Springer, 1997.

\bibitem[Rasouli and Jordan(2021)]{rasouli2021data}
Mohammad Rasouli and Michael~I Jordan.
\newblock Data sharing markets.
\newblock \emph{arXiv preprint arXiv:2107.08630}, 2021.

\bibitem[Ravindranath et~al.(2024)Ravindranath, Jiang, and Parkes]{ravindranath2024data}
Sai~Srivatsa Ravindranath, Yanchen Jiang, and David~C Parkes.
\newblock Data market design through deep learning.
\newblock \emph{Advances in Neural Information Processing Systems}, 36, 2024.

\bibitem[Rieke et~al.(2020)Rieke, Hancox, Li, Milletari, Roth, Albarqouni, Bakas, Galtier, Landman, Maier-Hein, et~al.]{rieke2020future}
Nicola Rieke, Jonny Hancox, Wenqi Li, Fausto Milletari, Holger~R Roth, Shadi Albarqouni, Spyridon Bakas, Mathieu~N Galtier, Bennett~A Landman, Klaus Maier-Hein, et~al.
\newblock The future of digital health with federated learning.
\newblock \emph{NPJ digital medicine}, 3\penalty0 (1):\penalty0 1--7, 2020.

\bibitem[Shapley(1997)]{shapley1997value}
Lloyd~S Shapley.
\newblock A value for n-person games.
\newblock \emph{Classics in Game Theory}, 69, 1997.

\bibitem[Sundararajan and Najmi(2020)]{sundararajan2020many}
Mukund Sundararajan and Amir Najmi.
\newblock The many shapley values for model explanation.
\newblock In \emph{International Conference on Machine Learning}, pages 9269--9278, 2020.

\bibitem[Taufiq et~al.(2023)Taufiq, Bl{\"o}baum, and Minorics]{taufiq2023manifold}
Muhammad~Faaiz Taufiq, Patrick Bl{\"o}baum, and Lenon Minorics.
\newblock Manifold restricted interventional shapley values.
\newblock In \emph{International Conference on Artificial Intelligence and Statistics}, pages 5079--5106. PMLR, 2023.

\bibitem[Wei et~al.(2020)Wei, Li, Ding, Ma, Yang, Farokhi, Jin, Quek, and Poor]{wei2020federated}
Kang Wei, Jun Li, Ming Ding, Chuan Ma, Howard~H Yang, Farhad Farokhi, Shi Jin, Tony~QS Quek, and H~Vincent Poor.
\newblock Federated learning with differential privacy: Algorithms and performance analysis.
\newblock \emph{IEEE Transactions on Information Forensics and Security}, 15:\penalty0 3454--3469, 2020.

\bibitem[Zhang et~al.(2021)Zhang, Xie, Bai, Yu, Li, and Gao]{zhang2021survey}
Chen Zhang, Yu~Xie, Hang Bai, Bin Yu, Weihong Li, and Yuan Gao.
\newblock A survey on federated learning.
\newblock \emph{Knowledge-Based Systems}, 216:\penalty0 106775, 2021.

\end{thebibliography}



\end{document}